\documentclass[reqno,a4paper,oneside]{article}
\usepackage{graphics,amssymb,amsmath,setspace,tikz}
\usepackage[a4paper]{geometry}
\usepackage{epsfig,mathrsfs}
\usepackage[final]{pdfpages}
\usepackage{enumerate}
\usepackage{epstopdf}
\usepackage{rotating}
\usepackage{multirow}
\usepackage{graphicx}
\usepackage{hyperref}
\usepackage[style=numeric,sorting=none]{biblatex}
\usepackage[toc,page]{appendix}
\usepackage{anysize}
\marginsize{2.5cm}{2.5cm}{2.5cm}{2.5cm}

\usepackage{faktor}
\usepackage{cancel}

\usepackage{amsthm}

\usepackage{enumitem}

\usepackage{quiver}

\usepackage{authblk}

\newtheorem{theorem}{Theorem}[section]
\newtheorem{definition}{Definition}[section]
\newtheorem{lemma}{Lemma}[section]

\newtheorem{remark}{Remark}[section]

\numberwithin{equation}{section}

\title{Flat space limits of symmetric Taub-NUT instantons}
\author{Matthew Dales}
\affil{School of Computing and Mathematical Sciences, University of Leicester, University Road, Leicester, United Kingdom}

\begin{document}

\maketitle

\begin{abstract}
    We construct $U(1)$-symmetric $SU(2)$ instantons on the Taub-NUT space with generic holonomy, and instanton number and magnetic charge $(k,m)=(2,0)$, $(1,1)$, and $(2,-1)$. We construct these instantons via the bow construction, a Nahm-like non-linear transformation which identifies them with a set of bow data solving an integrable ODE and a series of algebraic relations on an interval. We show how the bow data recovers the data for all corresponding $U(1)$-symmetric instantons on flat $\mathbb{R}^4$ and calorons on $\mathbb{R}^3\times S^1$ in various limits, and we demonstrate the existence of Taub-NUT instantons that do not limit to $\mathbb{R}^4$ instantons nor calorons.
\end{abstract}

%\tableofcontents

\section{Introduction}\label{section: Introduction}

The Taub-NUT space is the prototypical example of an asymptotically locally flat (ALF) gravitational instanton, a Ricci-flat $4$-manifold asymptotic to a circle fibration over $S^2$, and interpolates between flat $\mathbb{R}^4$ and $\mathbb{R}^4\times S^1$ in various limits of the metric. It arises as a Riemannian solution to the vacuum Einstein equations with complexified time. There has been recent interest in the study of instantons (anti-self-dual connections) over gravitational instantons, and the Taub-NUT space in particular due to their connections to QCD, string theory, and the study of 4-manifolds \cite{Chalmers1997Three-dimensionalMonopoles,Hanany1997TypeDynamics,Witten2009BranesSpaces}. A complete construction of instantons on $\mathbb{R}^4$, called the ADHM construction, was presented in 1978 \cite{Atiyah1978ConstructionInstantons}, and generalised to monopoles on $\mathbb{R}^3$ by Nahm in 1983 \cite{Nahm1983AllGroups}. Both of these constructions, and their more general counterparts, involve a non-linear transformation between the anti-self-dual connection and a set of data defined over a lower dimensional space. In the case of the ADHM construction this is to a single point, at which we have a set of matrices solving a series of purely algebraic relations, while in the Nahm transform we have an interval over which we have data solving an integrable ODE. These Nahm-like transformations establish an isometry between the moduli space of anti-self-dual connections and the moduli space of Nahm data (which typically amounts to a simpler problem). This idea has been extended to calorons, anti-self-dual connections on $\mathbb{R}^3\times S^1$ \cite{Charbonneau2010TheCalorons}, instantons on ALE spaces \cite{Kronheimer1990Yang-MillsInstantons}, and recently instantons on ALF spaces \cite{Cherkis2021InstantonsTheorem,Cherkis2024InstantonsConstruction,Cherkis2025InstantonsIsometry}.

Taub-NUT instantons are classified by 3 invariants: an instanton number $k$, a monopole charge $m$, and a holonomy parameter $\lambda$. Instantons as direct solutions to the anti-self-duality equation have only been found in the case of trivial asymptotic holonomy \cite{Etesi2001GeometricSpace}, in which there is a full $U(2)$ symmetry. By considering the Nahm-like transform for Taub-NUT instantons, which we will refer to as the `bow transform' we can construct the moduli spaces for instantons with non-trivial asymptotic holonomy, higher charge, and smaller symmetry group. The moduli space of all instantons with generic asymptotic holonomy, instanton number $1$ and monopole charge $0$ has been constructed \cite{Cherkis2009ModuliSpace}, and is the only completely classified case. Studying symmetric examples has proven productive for other, similar objects such as: monopoles \cite{Hitchin1995SymmetricMonopoles,Houghton1996TetrahedralMonopoles,Houghton1996OctahedralMonopoles,Charbonneau2022ConstructionSymmetries}, instantons \cite{Singer1999SymmetricFields,Cork2022ADHMSkyrmions}, and calorons \cite{Ward2004SymmetricCalorons,Harland2007LargeCalorons,Cork2018SymmetricMap,Kato2021SymmetricLimits}. By considering a group action on the space of solutions that acts as an isometry we can construct a geodesic submanifold of the moduli space, this is useful for studying the low-energy dynamics of instantons, since the trajectories will approximately follow geodesics in the moduli space \cite{Manton1982AMonopoles,Allen2013TheInstantons}. The complete moduli space of Taub-NUT instantons is hyperk\"{a}hler \cite{Cherkis2024InstantonsConstruction}, with a small enough symmetry group it is possible for one of the complex structures to be preserved, leaving a K\"{a}hler submanifold. We will extend the set of classified instantons to all $U(1)$-symmetric Taub-NUT instantons with generic asymptotic holonomy, and $(k,m)=(2,0),(1,1),(2,-1)$, ie $\max(k,k+m)=2$. In particular, the solutions we construct are analytic, and are therefore useful for studying the properties of the moduli space, including its metric.

The Taub-NUT space interpolates between the flat $\mathbb{R}^4$ and $\mathbb{R}^3\times S^1$ in various limits of the metric; it is therefore natural to ask whether the Taub-NUT instanton descends to a flat $\mathbb{R}^4$ instanton and a caloron in these limits. We will demonstrate that $U(1)$-symmetric $\mathbb{R}^4$ instantons and calorons are recovered in these limits, in particular all calorons found by Harland \cite{Harland2007LargeCalorons} arise this way, and that there exist Taub-NUT instantons that do not limit to calorons.

Sections \ref{section: Taub-NUT instantons} and \ref{section: The bow transform} review the definitions of the Taub-NUT space, Taub-NUT instantons, the bow, and the bow data; section \ref{section: The bow transform} also outlines the bow transform between Taub-NUT instantons and the bow data. In sections \ref{section: U(1) invariant (2,2,2)}, \ref{section: U(1) invariant solutions (1,2,1)}, and \ref{section: U(1) invariant solutions of type (2,1,2)} we will construct $U(1)$ symmetric instantons of charges $(k,m)=(2,0),(1,1),(2,-1)$ respectively under specific actions, and in section \ref{section: classification of solutions} we will prove that all $U(1)$ symmetric bow data can be constructed in these ways. Finally, in section \ref{section: limits of bow data} we will consider the instanton and caloron limits of the bow data.

% In section \ref{section: Taub-NUT instantons} we will introduce the Taub-NUT space and Taub-NUT instantons, and in section \ref{section: The bow transform} we will set-up the bow data and briefly describe the bow transform between Taub-NUT instantons and bow data. In sections \ref{section: U(1) invariant (2,2,2)}, \ref{section: U(1) invariant solutions (1,2,1)}, and \ref{section: U(1) invariant solutions of type (2,1,2)} we will construct $U(1)$ symmetric instantons of charge $(k,m)=(2,0),(1,1),(2,-1)$ respectively under a specific action, and in section \ref{section: classification of solutions} we will prove a classification of all $U(1)$ symmetric bow data by the constructed solutions. Finally in section \ref{section: limits of bow data} we will consider the instanton and caloron limits of the bow data.

\section{Taub-NUT instantons}\label{section: Taub-NUT instantons}

\subsection{The Taub-NUT space}\label{subsection: The Taub-NUT space}

The Taub-NUT space $\Sigma_1$ is an oriented Riemannian 4-manifold, which can be viewed as a circle bundle over $\mathbb{R}^3$ with a Dirac monopole at $\vec{q}\in\mathbb{R}^3$; in principle $\vec{q}$ can be moved to the origin by a coordinate choice, however we will keep it as a parameter to study the limit $|\vec{q}|\rightarrow\infty$. Let $\vec{t}\in\mathbb{R}^3$ and $\tau\in S^1_{4\pi}$ be coordinates on $\Sigma_1$, then the Taub-NUT metric takes the Gibbons-Hawking form \cite{Gibbons1979ClassificationSymmetries}
\begin{equation}\label{eq:Taub-NUT metric}
    ds^2=\frac{1}{4}\left(V(\vec{t})d\vec{t}^2+\frac{1}{V(\vec{t})}(d\tau+\omega)^2\right)\qquad:\qquad V(\vec{t})=l+\frac{1}{|\vec{t}-\vec{q}|}
\end{equation}
where $l\in\mathbb{R}_+$, and $\omega\in\Lambda^1(\mathbb{R}^3\setminus\lbrace\vec{q}\rbrace)$ satisfies $d\omega=\star_{(3)}dV$, where $\star_{(3)}$ denotes the Hodge star over Euclidean $\mathbb{R}^3$.

The Taub-NUT space is an asymptotically locally flat (ALF) gravitational instanton, ie it is Ricci-flat and asymptotically a circle fibration $S^1\hookrightarrow S^3\rightarrow S^2$, in this case it is degree 1, and therefore the Hopf fibration.

\subsection{Taub-NUT instantons}\label{subsection: Taub-NUT instantons}

Let $E\rightarrow\Sigma_1$ be a rank-2 hermitian vector bundle over $\Sigma_1$, then define
\begin{definition}
    A \textbf{Taub-NUT instanton} is a connection $A$ on $E\rightarrow\Sigma$ with anti-self-dual curvature $F=dA+A\wedge A$, and finite Yang-Mills energy
    \begin{equation}\label{eq:Yang-Mills energy}
        E_{\operatorname{YM}}[A]:=-\int_{\Sigma_1}\operatorname{tr}(F\wedge \star F).
    \end{equation}
\end{definition}
The finite energy condition implies the existence of 3 invariants which classify the instanton, the instanton number $k\in\mathbb{Z}$, the monopole charge $m\in\mathbb{Z}$, and the holonomy parameter $\lambda\in\mathbb{R}$; we will briefly describe these following the construction in \cite{Cherkis2011InstantonsGravitons}.

Consider $S^2_R\subset\mathbb{R}^3$, for some large $R$; at each point $p\in S^2_R$ we have a $\tau$-parameterised circle, so we have a 3-sphere $S^3_R$ fibred by the $\tau$-circles. For the Taub-NUT space this is the Hopf-fibration $S^1\hookrightarrow S^3_R\rightarrow S^2_R$. Since the instanton connection $A$ has finite energy, there exists a gauge in which $A|_{S^3_R}$ has components that are $\tau$-independent, and so take the form
\begin{equation}\label{eq:Connection at R}
    A|_{S^3_R}=\hat{A}+\hat{\Phi}\frac{d\tau+\omega}{V}
\end{equation}
where $\hat{A}$ is a connection of $\mathbb{R}^3\setminus\lbrace\vec{q}\rbrace$, and $\hat{\Phi}:\mathbb{R}^3\setminus\lbrace\vec{q}\rbrace\rightarrow\mathfrak{su}(2)$ is a Lie algebra valued function. The anti-self-duality condition for $A$ restricted to $S_R^3$ is then equivalent to the Bogomolny equation for $(\hat{A},\hat{\Phi})$
\begin{equation}\label{eq:Bogomolny}
    \hat{F}+\star_{(3)}\hat{D}\hat{\Phi}=0.
\end{equation}
The above gauge choice requires a map $g:S^3_R\rightarrow SU(2)\cong S^3$ from the 3-sphere onto itself, $g$ is therefore characterised by an element of $\pi^3(S^3)=\mathbb{Z}$, which is the instanton number $k$. Let $W\rightarrow\mathbb{R}^3$ be an $SU(2)$-bundle over $\mathbb{R}^3$, with restriction $W_R:=W|_{S^2_R}$, with connection $\hat{A}_R$. Let $E_\infty,W_\infty$ be the restrictions of $E$ and $W$ to $S^3_\infty$ and $S^2_\infty$ respectively. By the Birkhoff-Grothendieck theorem $W_\infty$ decomposes as the direct sum of 2 line bundles
\begin{equation}\label{eq:W decomposition}
    W_\infty=L_+\oplus L_-.
\end{equation}
The first Chern numbers of $L_\pm$ must sum to zero, since $W$ is trivial, so we denote $m=c_1(L_+)$, which is the monopole charge of the instanton. $W_\infty$ has connection $\hat{A}_\infty$, which is diagonal with respect to the splitting
\begin{equation*}
    \hat{A}_\infty=\hat{A}_++\hat{A}_-
\end{equation*}
where $\hat{A}_\pm$ is the unique connection on $L_\pm$ with constant curvature with respect to the volume form. The section $\hat{\Phi}_\infty\in\Gamma(\operatorname{End}(W_\infty))$ has eigenbundles $L_\pm$, with eigenvalues $\pm i\lambda$, for $\lambda\in[0,\frac{l}{2}]$. With respect to a local gauge the connection $A$ therefore takes the asymptotic form:
\begin{equation}\label{eq:Asymptotic form of the connection}
    A\sim\left(\left(\lambda+\frac{m}{2|\vec{t}-\vec{q}|}\right)\frac{d\tau+\omega}{V}\right)\begin{pmatrix}
        -i &0 \\ 0 & i
    \end{pmatrix}+\begin{pmatrix}
        \hat{A}_- & 0 \\ 0 & \hat{A}_+
    \end{pmatrix}+\mathcal{O}(|\vec{t}-\vec{q}|^2).
\end{equation}

Let $\mathcal{C}_{k,m}(\lambda)$ denote the space of connections that solve the anti-self-duality equations on $\Sigma_1$, and the appropriate boundary conditions for an instanton with instanton number $k$, monopole charge $m$, and holonomy parameter $\lambda$. We consider two elements in $\mathcal{C}_{k,m}(\lambda)$ to represent the same instanton if they differ only by a gauge transformation $g:\Sigma_1\rightarrow SU(2)$, such that $g|_{S^3_\infty}=\operatorname{id}$, so we are therefore interested in the moduli space of framed solutions
\begin{equation}\label{eq:Moduli space}
    \mathcal{M}_{k,m}(\lambda):=\faktor{\mathcal{C}_{k,m}(\lambda)}{\mathcal{G}}
\end{equation}
where $\mathcal{G}$ is the space of all possible gauge transformations.

Solutions to the anti-self-duality equation over $\Sigma_1$ have been constructed directly, notably by Etesi and Hausel \cite{Etesi2001GeometricSpace}. By considering a harmonic function ansatz similar to that of the JNR instanton on $\mathbb{R}^4$ \cite{Jackiw1977ConformalConfigurations}, a 1-parameter family of $SU(2)$ Taub-NUT instantons is constructed which have a $U(2)$ symmetry and trivial holonomy parameter $\lambda=0$. A construction similar to the Nahm transform for monopoles \cite{Nahm1983AllGroups}, or the ADHM construction of instantons on $\mathbb{R}^4$ \cite{Atiyah1978ConstructionInstantons}, has been developed for Taub-NUT instantons, and has been proven to be complete \cite{Cherkis2021InstantonsTheorem,Cherkis2024InstantonsConstruction,Cherkis2025InstantonsIsometry}. Such a construction allows us to compute the moduli spaces of Taub-NUT instantons for higher charges, lower symmetry, and generic asymptotic holonomy.

\section{The bow transform}\label{section: The bow transform}

\subsection{Bow data}\label{subsection: Bow data}

Let $\mathcal{I}=[-\frac{l}{2},\frac{l}{2}]$ be an interval with coordinate $s$. We designate the points $\pm\frac{l}{2}$ as bifundamental points, and the points $\pm\lambda$ as fundamental points. These points divide the interval into 3 subintervals $\mathcal{I}_1=[-\frac{l}{2},-\lambda]$, $\mathcal{I}_2=[-\lambda,\lambda]$, and $\mathcal{I}_3=[\lambda,\frac{l}{2}]$, the lengths of which we denote $l_1,l_2,l_3\in[0,l]$. Over each interval we have a hermitian bundle $\mathcal{E}_P\rightarrow\mathcal{I}_P$, of rank $\mathcal{R}_P$, such that $\mathcal{R}_1=\mathcal{R}_3=k$, and $\mathcal{R}_2=k+m$. If $m=0$, we have a constant rank bow, and to each fundamental point we associate a 1-complex dimensional vector space $V_\pm$. To a given bow we associate a set of bow data $\mathcal{BD}=(\mathcal{T},\mathcal{B},\mathcal{Q})$, a tuple consisting of 3 components: the Nahm data $\mathcal{T}$, the bifundamental data $\mathcal{B}$, and the fundamental data $\mathcal{Q}$. Here we will only consider $m=0$ and $m=\pm1$, the general set-up is presented in \cite{Cherkis2011InstantonsGravitons}. We will refer to bows as type-$(\mathcal{R}_1,\mathcal{R}_2,\mathcal{R}_3)$ to denote the ranks of the bundles.

\subsubsection*{The Nahm data}\label{subsubsection: The Nahm data}

The Nahm data $\mathcal{N}$ consists of a connection $\nabla_P=\frac{d}{ds}+T_P^0$ on the bundle $\mathcal{E}_P$, and a triple of bundle endomorphisms $T_P^j:\mathcal{I}_P\rightarrow\mathfrak{u}(\mathcal{R}_P)$ solving Nahm's equations
\begin{equation}\label{eq:Nahm's equations}
    \frac{dT_P^j}{ds}+[T_P^0,T_P^j]+\frac{1}{2}\epsilon_{jkl}[T_P^k,T_P^l]=0
\end{equation}
on the interior of each interval. We require that the Nahm data are continuous over each subinterval but not over the whole interval. For the most part we will be working with Donaldson's Nahm complexes \cite{Donaldson1984NahmsMonopoles}:
\begin{equation}\label{eq:Nahm complexes}
    \alpha_P:=T_P^0+iT_P^1\qquad;\qquad \beta_P:=T_P^2+iT_P^3
\end{equation}
in terms of which Nahm's equations become:
\begin{align}
    \frac{d(\alpha_P+\alpha_P^\dagger)}{ds}+[\alpha_P,\alpha_P^\dagger]+[\beta_P,\beta_P^\dagger]=0\label{eq:Real Nahm equation}\\
    \frac{d\beta_P}{ds}+[\alpha_P,\beta_P]=0\label{eq:Complex Nahm equation}
\end{align}
and from which we construct
\begin{equation}\label{eq:AP matrix}
    A_P(s;\xi):=\beta_P(s)+(\alpha_P(s)+\alpha_P^\dagger(s))\xi-\beta_P^\dagger(s)\xi^2\qquad:\qquad \xi\in\mathbb{C}
\end{equation}
in terms of which we will write the matching conditions at the marked points.

\subsubsection*{The bifundamental data}\label{subsubsection: The bifundamental data}

The bifundamental data $\mathcal{B}$ is a pair $B_\pm\in\mathbb{C}^{k\times k}$ of homomorphisms between the fibres of $\mathcal{E}_1$ at $-\frac{l}{2}$ and $\mathcal{E}_3$ at $\frac{l}{2}$, we require that $(B_-,B_+)\neq0$, and that the bifundamental matching conditions are satisfied:
\begin{align}
    A_1(-\frac{l}{2})=(B_+-B_-^\dagger\xi)(B_-+B_+^\dagger\xi)+q\label{eq:Bifundamental matching condition -}\\
    A_3(\frac{l}{2})=(B_-+B_+^\dagger\xi)(B_+-B_-^\dagger\xi)+q\label{eq:Bifundamental matching condition +}
\end{align}
for all $\xi\in\mathbb{C}$, where $q=(q_3-iq_2+2q_1\xi-(q_3+iq_2)\xi^2)\otimes\operatorname{id}_k$ for $q_1,q_2,q_3\in\mathbb{R}$ the coordinates of the Taub-NUT centre.

\subsubsection*{The fundamental data}\label{subsubsection: The fundamental data}

The form of the fundamental data $\mathcal{Q}$ is dependent on the value of $m$.

For $m=0$, the fundamental data consists of homomorphisms $I_\pm,J_\pm^\dagger\in\mathbb{C}^k$ between the fibres of $\mathcal{E}_P$ at $\pm\lambda$ and the vector spaces $V_\pm$, such that $(I_\pm,J^\dagger_\pm),(I_-,J^\dagger_-),(I_+,J_+^\dagger)\neq0$, and
\begin{align}
    A_1(-\lambda)-A_2(-\lambda)=(I_--J_-^\dagger\xi)(J_-+I_-^\dagger\xi)\label{eq:Fundamental matching condition -}\\
    A_2(\lambda)-A_3(\lambda)=(I_+-J_+^\dagger\xi)(J_++I_+^\dagger\xi)\label{eq:Fundamental matching condition +}
\end{align}
for all $\xi\in\mathbb{C}$.

For $m=1$, the Nahm data changes rank across the fundamental points, with rank $k$ on $\mathcal{I}_{1,3}$ and $k+1$ on $\mathcal{I}_2$. Now, the fundamental data consists of homomorphisms $X,Y\in\mathbb{C}^{(k+1)\times k}$ between the fibres of $\mathcal{E}_P$ on either side of the fundamental point, such that $X^\dagger X=Y^\dagger Y=\operatorname{id}_k$, and
\begin{equation}\label{eq:Fundamental matching condition non-zero magnetic charge}
    X^\dagger T_2^j(-\lambda)X=T_1^j(-\lambda)\qquad;\qquad Y^\dagger T_2^j(\lambda) Y=T_3^j(\lambda).
\end{equation}

% For $m=-1$, the outer intervals are higher rank, so we now have fundamental data $X,Y\in\mathbb{C}^{k\times(k+1)}$ such that $XX^\dagger=YY^\dagger=\operatorname{id}_k$, with matching condition
% \begin{equation}
%     X^\dagger T_2^j(-\lambda)X=T_1^j(-\lambda)\qquad;\qquad Y^\dagger T_2^j(\lambda)Y=T_3^j(\lambda).
% \end{equation}

For $m=-1$, the outer intervals have the higher rank, so the fundamental data is $W,Z\in\mathbb{C}^{(k+1)\times k}$ such that $W^\dagger W=Z^\dagger Z=\operatorname{id}_k$, and such that
\begin{equation}
    W^\dagger T_1^j(-\lambda)W=T_2^j(-\lambda)\qquad;\qquad Z^\dagger T_3^j(\lambda)Z=T_2^j(\lambda).
\end{equation}

\subsubsection*{Gauge transformations}\label{subsubsection: Gauge transformations}

Gauge transformations on each subinterval $g_P:\mathcal{I}_P\rightarrow U(\mathcal{R}_P)$ must be smooth on the interior of the interval, and for $m=0$ we require $g_1(-\lambda)=g_2(-\lambda)$ and $g_2(\lambda)=g_3(\lambda)$. Gauge transformations act on the bow data as:
\begin{equation}\label{eq:Gauge transformations on bow data}
    g:\begin{pmatrix}
        T_P^0 \\ T_P^j \\ \\ B_+ \\ B_- \\ \\ I_\pm \\ J_\pm \\ \\ X \\ Y \\ \\ W \\ Z
    \end{pmatrix}\rightarrow \begin{pmatrix}
        g_P^{-1}T_P^0g_P+g_P^{-1}\frac{d}{ds}g_P \\ g_P^{-1}T_P^j g_P \\ \\ g_1^{-1}(-\frac{l}{2})B_+g_3(\frac{l}{2}) \\ g_3^{-1}(\frac{l}{2})B_-g_1(-\frac{l}{2}) \\ \\ g_2^{-1}(\pm\lambda)I_\pm \\ J_\pm g_2(\pm\lambda) \\ \\ g_2^{-1}(-\lambda)Xg_1(-\lambda) \\ g_2^{-1}(\lambda)Yg_3(\lambda) \\ \\ g_1^{-1}(-\lambda)Wg_2(-\lambda) \\ g_3^{-1}(\lambda)Zg_2(\lambda)
    \end{pmatrix}.
\end{equation}

It is always possible to choose a gauge in which $T_P^0=0$ identically across all intervals, unless otherwise specified this is the gauge in which we will work; in terms of the complex Nahm data this is equivalent to fixing $\alpha_P$ to be hermitian on all intervals. The set of remaining gauge transformations is $\mathcal{G}_0$ comprised of constant transformations on each subinterval.

We are principally interested in the moduli space of bow data. Consider a bow with instanton number $k$, monopole charge $m$, and fundamental points at $\pm\lambda$. We denote by $\tilde{\mathcal{C}}_{k,m}(\lambda)$ the space of all bow data solving the bow conditions over such a bow, and $\tilde{\mathcal{G}}$ the space of all gauge transformations. The moduli space is
\begin{equation}\label{eq:Moduli space of bow data}
    \tilde{\mathcal{M}}_{k,m}(\lambda)=\faktor{\tilde{\mathcal{C}}_{k,m}(\lambda)}{\tilde{\mathcal{G}}}.
\end{equation}
The moduli space $\tilde{\mathcal{M}}_{k,m}(\lambda)$ of bow data is diffeomorphic to the moduli space $\mathcal{M}_{k,m}(\lambda)$ of Taub-NUT instantons \cite{Cherkis2021InstantonsTheorem,Cherkis2024InstantonsConstruction,Cherkis2025InstantonsIsometry}. There is a natural hyperk\"{a}hler $L^2$ metric on each moduli space, and the Bow transform, which we will summarise later, is a hyperk\"{a}hler isometry between $\mathcal{M}_{k,m}(\lambda)$ and $\tilde{\mathcal{M}}_{k,m}(\lambda)$.

\subsection{Abelian instantons}\label{subsection: Abelian instantons}

The Taub-NUT space itself is the moduli space of a particularly simple bow \cite{Cherkis2011InstantonsGravitons,Gibbons1997HyperKahlerSpaces}, in which there are no fundamental points and the bow has constant rank $k=1$. In this case the Taub-NUT space is the $U(1)$ gauge quotient manifold of a level set $\mu^{-1}(\vec{q})$, where $\mu$ is the hyperk\"{a}hler moment map of the gauge group action, therefore $\mu^{-1}(\vec{q})$ is a principal bundle over $\Sigma_1$. For any point $s\in\mathcal{I}$, consider the fibre $\mathcal{E}_s$ of the hermitian bundle over the bow at that point, and the trivial bundle $\mathcal{E}_s\times\mu^{-1}(\vec{q})\rightarrow\mu^{-1}(\vec{q})$. Since the $U(1)$ gauge group acts on both $\mathcal{E}_s$ and $\mu^{-1}(\vec{q})$, we obtain an associated line bundle $e_s\rightarrow\Sigma_1$ for each $s\in\mathcal{I}$. We therefore have a bow-parameterised family of line bundles $e_s$, each of which carries an abelian instanton connection $a_s$, ie a $U(1)$ connection with anti-self-dual curvature and finite action. In an appropriate choice of coordinates and gauge, the abelian instanton takes the form
\begin{equation}\label{eq:Abelian instanton}
    a_s=is\frac{d\tau+\omega}{V}.
\end{equation}

\subsection{The up and down transforms}\label{subsection: The up and down transforms}

The isometry between the moduli space $\tilde{\mathcal{M}}_{k,m}(\lambda)$ and of Taub-NUT instantons $\mathcal{M}_{k,m}(\lambda)$ is a generalisation of the ADHM-Nahm transforms for instantons on $\mathbb{R}^4$ \cite{Atiyah1978ConstructionInstantons}, monopoles \cite{Nahm1983AllGroups}, and calorons \cite{Charbonneau2010TheCalorons}. This consists of an `up transform' which takes a set of bow data and produces an instanton, and a `down transform' which takes an instanton and produces bow data \cite{Cherkis2021InstantonsTheorem,Cherkis2024InstantonsConstruction,Cherkis2025InstantonsIsometry}. We will briefly summarise these transforms in the following sections.

\subsubsection*{The up transform}\label{subsubsection: the up transform}

Consider a set of bow data $(T_P^\alpha,B_\pm,I_\pm,J_\pm)$, let $e^\alpha$ be a 2-dimensional representation of the quaternions, and define $\mathbf{T}=\sum_\alpha T^\alpha\otimes e^\alpha$, and $\mathbf{t}=\tau\otimes e^0+\sum_j t^j\otimes e^j$ for $(\tau,\vec{t})\in\Sigma_1$. Then, on each interval $\mathcal{I}_P$, let $U_p(s;\mathbf{t})$ be $\mathcal{R}_P$-component column vectors of complex quaternions satisfying
\begin{equation}\label{eq:Differential equation for U_P}
    \frac{d}{ds}U_P=\left(\mathbf{T}-i\mathbf{t}\right)U_P
\end{equation}
on the interior of each interval. At $s=\pm\frac{l}{2}$, we have complex $k$-column vectors $Z_\pm$ such that
\begin{align}
    U_3\left(\frac{l}{2}\right)+Z_+B^\dagger=0\qquad&:\qquad B=\frac{1}{\sqrt{2}}\begin{pmatrix}
        B_+^\dagger+B_- \\ B_+^\dagger-B_-
    \end{pmatrix}\label{eq:Bifundamental matching condition for U_3}\\
    U_1\left(-\frac{l}{2}\right)-Z_-B_c^\dagger=0\qquad&:\qquad B_c=\frac{1}{\sqrt{2}}\begin{pmatrix}
        B_-^\dagger-B_+\\ B_-^\dagger+B_+
    \end{pmatrix}.\label{eq:Bifundamental matching condition for U_1}
\end{align}
For $m=0$, we have $W_\pm\in\mathbb{C}$ satisfying
\begin{equation}\label{eq:Fundamental matching conditions for U_P}
    \begin{split}
        U_2(\lambda)-U_3(\lambda)+W_+Q_+^\dagger=0 \\ U_1(-\lambda)-U_2(-\lambda)+W_-Q_-^\dagger=0
    \end{split}\qquad:\qquad Q_\pm=\frac{1}{\sqrt{2}}\begin{pmatrix}
        J_\pm^\dagger-I_\pm \\ J_\pm^\dagger+I_\pm
    \end{pmatrix}.
\end{equation}
We require that $U_P,Z_\pm,W_\pm$ satisfy the normalisation condition:
\begin{equation}\label{eq:Normalisation condition for bow construction data}
    \sum_{P=1}^3\int_{\mathcal{I}_P}U_P^\dagger U_Pds+\sum_{\pm}Z_\pm^\dagger Z_\pm+W_\pm^\dagger W_\pm=e^0.
\end{equation}
Then, the components of the Taub-NUT instanton connection are given by:
\begin{equation}\label{eq:Construction of taub-nut instanton from bow data}
    A_\alpha=\sum_{P=1}^3\int_{\mathcal{I}_P}U_P^\dagger\frac{\partial}{\partial t^\alpha}U_Pds+\sum_{\pm}\left(Z_\pm^\dagger\frac{\partial}{\partial t^\alpha}Z_\pm+W_\pm^\dagger\frac{\partial}{\partial t^\alpha}W_\pm\right)
\end{equation}
where $t^0=\tau$.

For $m=1$, we do not define $W_\pm$, and instead require that $U_P$ satisfy
\begin{equation}\label{eq:matching condition for U_P for non-zero magnetic charge}
    X^\dagger U_2(-\lambda)=U_1(-\lambda)\qquad,\qquad Y^\dagger U_2(\lambda)=U_3(\lambda)
\end{equation}
the normalisation condition and the construction of $A_\alpha$ are the same but without terms involving $W_\pm$; and we have a similar construction for $m=-1$, accounting for the outer intervals now having the higher rank.

\subsubsection*{The down transform}\label{subsubsection: the down transform}

Let $A$ be an $SU(2)$ instanton connection on a bundle $E\rightarrow\Sigma_1$, and let $a_s$ be a bow parameterised family of abelian instanton connections on $U(1)$ bundles $e_s\rightarrow\Sigma_1$. For each $s\in\mathcal{I}$ we can associate to $E$ a line bundle $E\otimes e_s$, each equipped with an instanton connection $\nabla^s$ given by the induced connection. We choose an orientation of $\Sigma_1$ with volume form $\operatorname{Vol}=Vd\tau\wedge dt_1\wedge dt_2\wedge dt_3$, and the orthonormal frame
\begin{equation}\label{eq:Orthonormal frame of the tangent bundle}
    \Theta_0=\sqrt{V}\partial_\tau\quad,\quad \Theta_j=\frac{1}{\sqrt{V}}(\partial_j-\omega_j\partial_\tau)
\end{equation}
for the tangent bundle, and
\begin{equation}\label{eq:Dual coframe}
    \theta^0=\frac{1}{\sqrt{V}}(d\tau+\omega)\quad,\quad \theta^j=\sqrt{V}dt^j
\end{equation}
its dual coframe \cite{Cherkis2025InstantonsIsometry}. Let $\nabla_X^s$ be the $X$-direction covariant derivative of $\nabla^s$. We note that the covariant derivate is independent of $s$ for any $X\perp\Theta_4$, ie $\nabla_X^{s_1}=\nabla_X^{s_2}$ for any $s_1,s_2\in\mathcal{I}$. Then we can define the Dirac operator $\mathfrak{D}_s$ and its formal adjoint $\mathfrak{D}_s^\dagger$:
\begin{equation}
    \mathfrak{D}_s:=\nabla_{\Theta_0}^s+i\sigma^j\nabla_{\Theta_j}^s\qquad;\qquad \mathfrak{D}_s^\dagger:=-\nabla_{\Theta_0}^s+i\sigma^j\nabla_{\Theta_j}^s.
\end{equation}
The Dirac operator can be written as
\begin{equation}
    \mathfrak{D}_s=\mathfrak{D}_0+a_s\qquad:\qquad \mathfrak{D}_0=\nabla_{\Theta_0}^0+i\sigma^j\nabla_{\Theta_j}
\end{equation}
where $\nabla_{\Theta_0}^0(\Phi\otimes\phi_s)=(D_{\Theta_0}^A\Phi)\otimes\phi_s+\Phi\otimes(\partial_{\Theta_0}\phi_s)$ is independent of $s$, acting on $\Phi\in\Gamma(E),\phi_s\in\Gamma(e_s)$.

Consider $\psi_s^i\in\operatorname{ker}(\mathfrak{D}_s)$ for $i=1,2,\dots,d=:\operatorname{dim}(\operatorname{ker}(\mathfrak{D}_s))$ such that
\begin{equation}\label{eq:orthonormal condition on kernel of Dirac operator}
    \int_{\Sigma_1}(\psi_s^i)^\dagger\psi_s^jd\operatorname{Vol}=\delta^{ij}.
\end{equation}
Let $\Psi_s:=\begin{pmatrix}
    \psi_s^1 & \psi_s^2& \cdots & \psi_s^d
\end{pmatrix}$ then the Nahm data on the bow $\mathcal{I}$ are given by:
\begin{equation}\label{eq:Construction of Nahm data}
    T^j(s)=\int_{\Sigma_1}t^j\Psi_s^\dagger\Psi_s d\operatorname{Vol}\quad,\quad T^0(s)=\int_{\Sigma_1}\Psi_s^\dagger\frac{d}{ds}\Psi_s d\operatorname{Vol}.
\end{equation}
Details on the construction of the fundamental and bifundamental matching data are to be found in \cite{Cherkis2025InstantonsIsometry}.

\subsubsection*{The reality condition}\label{subsubsection: the reality condition}

\begin{theorem}[The reality condition]\label{The reality condition}
    The bow data of an $SU(2)$ instanton on $\Sigma_1$ is always gauge equivalent to a set of bow data satisfying the additional constraints:
    \begin{equation}\label{eq:The reality condition - equation}
        T^j(-s)=T^j(s)^t
    \end{equation}
    for all $s\in\mathcal{I}$, and
    \begin{align}
        B_+B_-=(B_-B_+)^t \quad,\quad B_+B_+^\dagger-B_-^\dagger B_-=B_+^\dagger B_+-B_-B_-^\dagger\label{eq:Reality condition on bifundamental data}\\
        I_-J_-=-(I_+J_+)^t\quad,\quad I_-I_-^\dagger-J_-J_-^\dagger=-(I_+I_+^\dagger-J_+^\dagger J_+)^t\quad,\quad Y=\bar{X}\label{eq:reality condition for fundamental data}.
    \end{align}
\end{theorem}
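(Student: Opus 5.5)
The plan is to exploit the quaternionic (pseudo-real) structure of the fundamental representation of $SU(2)$ inside the down transform. An $SU(2)$ bundle $E$ carries an antilinear map $\epsilon\circ K$ (complex conjugation composed with $\epsilon=i\sigma^2$) that commutes with the connection, so $E\cong\bar E$ as bundles with connection. The abelian family satisfies $\overline{e_s}\cong e_{-s}$, since conjugating the connection in (\ref{eq:Abelian instanton}) sends $a_s$ to $a_{-s}$. Hence $E\otimes e_s\cong\overline{E\otimes e_{-s}}$. I would combine this with the charge conjugation on spinors, which anticommutes appropriately with $\sigma^j$, to produce an antilinear map $C:\Gamma(E\otimes e_s\otimes S)\to\Gamma(E\otimes e_{-s}\otimes S)$. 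The first key step is to check that $C\mathfrak{D}_s=\pm\mathfrak{D}_{-s}C$. Granting this, $C$ maps $\ker\mathfrak{D}_s$ isometrically and antilinearly onto $\ker\mathfrak{D}_{-s}$, since it preserves the $L^2$ norm in (\ref{eq:orthonormal condition on kernel of Dirac operator}).

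Given an orthonormal frame $\Psi_s$ for $s\ge0$, I will define $\Psi_{-s}:=C\Psi_s$. This is the gauge choice on the bow; any other choice differs by a unitary gauge transformation $g(s)$, so the statement is "up to gauge". Substituting into (\ref{eq:Construction of Nahm data}), and using that $t^j$ is real and that $C$ is antiunitary, gives
\[
T^j(-s)=\int t^j (C\Psi_s)^\dagger C\Psi_s=\overline{\int t^j\Psi_s^\dagger\Psi_s}=\overline{T^j(s)}=T^j(s)^t,
\]
where the last equality uses that $T^j$ is hermitian. For $T^0$ the same computation, together with the sign flip from $d/ds$ under $s\mapsto-s$, gives $T^0(-s)=-\overline{T^0(s)}$. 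This is consistent with working in the gauge $T^0=0$: once we pass to that gauge, the residual gauge is constant, and the symmetric choice survives on $\mathcal{I}_2$. We must also make sure the choice of frame is compatible on the outer intervals, where $s$ and $-s$ lie in different subintervals ($\mathcal{I}_1$ versus $\mathcal{I}_3$). There the map $C$ identifies $\mathcal{E}_1$ at $s$ with $\overline{\mathcal{E}_3}$ at $-s$.

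The remaining constraints come from transporting the matching data through $C$. The bifundamental data $B_\pm$ come from the jump across $s=\pm l/2$, which is identified with the periodicity $e_{s+l}\cong e_s$ twisted by the Taub-NUT circle. Applying $C$ swaps the two ends, so the conditions (\ref{eq:Bifundamental matching condition -}) and (\ref{eq:Bifundamental matching condition +}) are exchanged under transpose. Comparing them with $A_1(-l/2)=A_3(l/2)^t$ then forces the relations (\ref{eq:Reality condition on bifundamental data}). The fundamental data are treated in the same way: $C$ maps the jump at $-\lambda$ to the conjugate of the jump at $+\lambda$. Taking transposes of (\ref{eq:Fundamental matching condition -}) and (\ref{eq:Fundamental matching condition +}) and comparing coefficients of $\xi$ gives the relations for $I_\pm,J_\pm$. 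For $m=1$, the inclusions $X,Y$ are intertwined by $C$, giving $Y=\bar X$.

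I expect the main obstacle to be the sign and ordering bookkeeping in this last step. One must check that $C$ intertwines $\mathfrak{D}_s$ exactly with $\mathfrak{D}_{-s}$, rather than with its adjoint or with a reflected orientation. One must also extract $B_\pm,I_\pm,J_\pm$ from the kernel, using the construction in \cite{Cherkis2025InstantonsIsometry}, carefully enough that the transposes land with the stated signs, for example the minus sign in $I_-J_-=-(I_+J_+)^t$. If the direct route through the Dirac operator becomes too messy, I would fall back to a purely bow-side argument. That argument would show that the map $(T(s),B_\pm,\dots)\mapsto(T(-s)^t,\dots)$ is an involution on $\tilde{\mathcal{M}}_{k,m}(\lambda)$ which corresponds to conjugation by $\epsilon$ on $SU(2)$ instantons, and hence acts trivially on the moduli space. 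Each point would then be fixed up to gauge, and the fixing gauge transformation could be square-rooted to produce a representative satisfying the constraints exactly.
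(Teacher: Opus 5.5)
Your proposal follows the paper's proof essentially step for step. You use the quaternionic structure of $SU(2)$ together with $\overline{a_s}=a_{-s}$ to get an antiunitary map $\ker\mathfrak{D}_s\to\ker\mathfrak{D}_{-s}$, fix the frame by $\Psi_{-s}=C\Psi_s$ to obtain $T^j(-s)=T^j(s)^t$, and read off the matching-data relations by transposing the matching conditions and comparing powers of $\xi$.

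In places you are more careful than the paper. It leaves the spinor conjugation implicit in $\Omega$, and it infers $Y=\bar X$ from $X^\dagger T_2^j(-\lambda)X=Y^tT_2^j(-\lambda)\bar Y$, which does not force it on its own; your intertwining of the inclusions by $C$ does. Note also that in both arguments the coefficient of $\xi$ actually gives $(B_+B_+^\dagger-B_-^\dagger B_-)^t=B_+^\dagger B_+-B_-B_-^\dagger$, so the displayed bifundamental relation is missing a transpose.
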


\begin{proof}
    Since $Sp(1)\cong SU(2)$, we have that $A$ must satisfy
    \begin{equation}\label{eq:Sp(1) condition on A}
        A=\Omega^{-1}A^*\Omega\quad:\quad \Omega=i\sigma^2.
    \end{equation}
    Then, considering the form of $a_s$ \eqref{eq:Abelian instanton}, \eqref{eq:Sp(1) condition on A} implies that
    \begin{equation}
        \mathfrak{D}_s=\Omega^{-1}\mathfrak{D}_{-s}^*\Omega.
    \end{equation}
    Now, let $\psi_s^j\in\operatorname{ker}(\mathfrak{D}_s)$, and form $\Psi_s=\begin{pmatrix}
        \psi_s^1&\psi_s^2&\cdots&\psi_s^d
    \end{pmatrix}$, then:
    \begin{equation}
        \mathfrak{D}_s\Psi_s=0\quad\Rightarrow\quad(\Omega^{-1}\mathfrak{D}_{-s}^*\Omega)\Psi_s=(\Omega\mathfrak{D}_{-s}\Omega^{-1}\Psi^*)^*=0
    \end{equation}
    so, $\Omega^{-1}\Psi_s^*\in\operatorname{ker}(\mathfrak{D}_{-s})$. Therefore, there exists a gauge in which $\Psi_{-s}=\Omega^{-1}\Psi_s^*$, and so, considering the construction of the Nahm matrices:
    \begin{equation}
        T^j(-s)=\int_{\Sigma_1}t^j\Psi_{-s}^\dagger\Psi_{-s}d\operatorname{Vol}=\left(\int_{\Sigma_1}t^j\Psi_s^\dagger\Psi_sd\operatorname{Vol}\right)^t
        =T(s)^t.
    \end{equation}
    Considering Nahm data that satisfy $T^j(-s)=T^j(s)^t$ and therefore $A(-s;\xi)=A(s;\xi)^t$ \eqref{eq:AP matrix}, and considering the matching conditions \eqref{eq:Bifundamental matching condition -}, \eqref{eq:Bifundamental matching condition +}, \eqref{eq:Fundamental matching condition -}, \eqref{eq:Fundamental matching condition +}, \eqref{eq:Fundamental matching condition non-zero magnetic charge}, we can see that
    \begin{align}
        &\left((B_+-B_-^\dagger\xi)(B_-+B_+^\dagger\xi)\right)^t=(B_-+B_+^\dagger\xi)(B_+-B_-^\dagger\xi)\\
        &\left(I_--J_-^\dagger\xi)(J_-+I_-^\dagger\xi)\right)^t=-(I_+-J_+^\dagger\xi)(J_++I_+^\dagger\xi)\\
        &X^\dagger T_2^j(-\lambda)X=Y^tT_2^j(-\lambda)\bar{Y}.
    \end{align}
    Expanding in orders of $\xi$, we have \eqref{eq:reality condition for fundamental data} and \eqref{eq:Reality condition on bifundamental data}, as required.
\end{proof}

\section{U(1) invariant solutions of type (2,2,2)}\label{section: U(1) invariant (2,2,2)}

We are interested in finding bow data symmetric under an action $\varphi:G\times\mathcal{BD}\rightarrow\mathcal{BD}$, which, under the bow transform, produces an instanton on the Taub-NUT space symmetric under a corresponding action. We therefore wish to find a set of bow data solving the conditions outlined in section \ref{subsection: Bow data}, which is invariant under $\varphi$ up to a gauge transformation.

\begin{definition}\label{def:Invariance}
    An element $X\in\mathcal{BD}$ within the space of bow data is \textbf{symmetric} under a group action $\varphi:G\times\mathcal{BD}\rightarrow\mathcal{BD}$ if, for all $h\in G$ there exists a compensating gauge transformation $g\in\mathcal{G}$ such that $\varphi(h,g\cdot X)=X$.
\end{definition}

There are 3 group actions we wish to consider which leave the bow conditions invariant. Each corresponds to an action on the Taub-NUT instanton the bow data produces via the bow transform:

\paragraph{Rotations}

The Dirac monopole in the Taub-NUT metric \eqref{eq:Taub-NUT metric} vanishes along a line containing $\vec{q}$; by a choice of coordinates we can fix this line to be the first axis with $\vec{q}=(q,0,0)$, we then have the action on the bow data:
\begin{equation}\label{eq:Rotation action}
    \Theta:\begin{pmatrix}
        T^0 \\ \vec{T} \\ B_\pm \\ I_\pm \\ J_\pm
    \end{pmatrix}\mapsto\begin{pmatrix}
        T^0+\frac{i}{l}\theta\operatorname{id}\\ R_1(\theta)\vec{T} \\ B_\pm e^{i\frac{\theta}{2}} \\ I_\pm e^{i\frac{\theta}{2}} \\ J_\pm e^{i\frac{\theta}{2}}
    \end{pmatrix}
\end{equation}
where $R_1(\theta)\in SO(3)$ is a rotation around the first axis by an angle $\theta$. Through the bow transform this correspond to a rotation in the Taub-NUT instanton together with a translation in $\tau$: $\varphi:(\vec{t},\tau)\mapsto\left(R_1(\theta)\vec{t},\tau+\frac{1}{l}\theta\right)$.

\paragraph{Circle fibre translations}

\begin{equation}\label{eq:Circle fibre translation action}
    \mathcal{A}:\begin{pmatrix}
        T^0 \\ \vec{T} \\ B_\pm \\ I_\pm \\ J_\pm
    \end{pmatrix}\mapsto\begin{pmatrix}
        T^0+ia\operatorname{id} \\ \vec{T} \\ B_\pm \\ I_\pm \\ J_\pm
    \end{pmatrix}
\end{equation}
for $a\in\mathbb{R}$ corresponds with translations in the $\tau$ coordinate, $\tau\mapsto\tau+a$.

% \paragraph{Phase actions}

% \begin{equation}\label{eq:Unframed gauge transformation action}
%     \mathcal{B}:\begin{pmatrix}
%         T^0_{1,3} \\ T^0_2 \\ \vec{T} \\ B_\pm \\ I_\pm \\ J_\pm
%     \end{pmatrix}\mapsto\begin{pmatrix}
%         T_{1,3}^0 \\ T_2^0+ib\operatorname{id} \\ \vec{T} \\ B_\pm \\ I_\pm \\ J_\pm
%     \end{pmatrix}
% \end{equation}
% for $b\in\mathbb{R}$ corresponds with unframed gauge transformations, ie gauge transformations on the Taub-NUT instanton that do not act as the identity at infinity, but do preserve the splitting of the bundle $E$ into line bundles.

There is an additional action:
\begin{equation}\label{eq:Unframed gauge transformation action}
    \mathcal{B}:\begin{pmatrix}
        T^0_{1,3} \\ T^0_2 \\ \vec{T} \\ B_\pm \\ I_\pm \\ J_\pm
    \end{pmatrix}\mapsto\begin{pmatrix}
        T_{1,3}^0 \\ T_2^0+ib\operatorname{id} \\ \vec{T} \\ B_\pm \\ I_\pm \\ J_\pm
    \end{pmatrix}
\end{equation}
for $b\in\mathbb{R}$, which acts as a translation in $T^0$ only on the middle interval.

\subsubsection*{Symmetric bow data}\label{subsubsection: symmetric bow data}

By considering $a,b$ to be functions of $\theta$, we wish to find bow data $\mathcal{BD}$, satisfying the bow conditions, such that
\begin{equation}\label{eq:Symmetry equation}
    (\Theta\circ\mathcal{A}\circ\mathcal{B}\circ g)(\theta)\cdot\mathcal{BD}=\mathcal{BD}
\end{equation}
where $g$ is a `compensating gauge transformation'.
We can always choose to work in a gauge in which $T_P^0\equiv0$ across all intervals. We can consider the action $\varphi=\Theta\circ\mathcal{A}\circ\mathcal{B}$ in this gauge by defining
\begin{equation}
    \phi(\theta)=-\left(\frac{\theta}{2}+a(\theta)l+2b(\theta)\lambda\right)\qquad;\qquad \psi(\theta)=-\left(\frac{\lambda\theta}{2l}+a(\theta)\lambda+b(\theta)\lambda\right)
\end{equation}
then, the action of $\varphi$ on the bow data is
\begin{equation}
    \varphi:\begin{pmatrix}
         \vec{T} \\ B_\pm \\ I_\pm \\ J_\pm
    \end{pmatrix}\mapsto\begin{pmatrix}
         R_1(\theta)\vec{T} \\ B_\pm e^{i(\frac{\theta}{2}\pm\phi)} \\ I_\pm e^{i(\frac{\theta}{2}\pm\psi)} \\ J_\pm e^{i(\frac{\theta}{2}\mp\psi)}
    \end{pmatrix}.
\end{equation}
The actions of $\phi,\psi$ are then natural phase actions on the fundamental and bifundamental data, and we can therefore interpret the action \eqref{eq:Unframed gauge transformation action} as a combination of a global phase action (corresponding to the action of $\psi$ on the fundamental data) on the instanton together with a circle fibre translation. The global phase action acts via unframed gauge transformations on the instanton, these are gauge transformations that preserve the bundle splitting \eqref{eq:W decomposition} but do not act as the identity at spatial infinity.

In general, the compensating gauge transformation in definition \ref{def:Invariance} must be an $N$-dimensional unitary representation of the group, and since $N=2$, in our case must take the form
\begin{equation}\label{eq:Compensating gauge transformation}
    g=\exp\left(i\left(\frac{K\theta}{2}\sigma^2+\omega\right)\right)=\rho_{K}e^{i\omega}\qquad:\qquad \rho_{K}=\begin{pmatrix}
        \cos\frac{K\theta}{2} & \sin\frac{K\theta}{2} \\ -\sin\frac{K\theta}{2} & \cos\frac{K\theta}{2}
    \end{pmatrix}
\end{equation}
where $K\in\mathbb{Z}$ and $\omega\in\mathbb{R}$. We shall see \ref{Theorem: classification} that there are only 3 values of $K\in\mathbb{Z}$ that allow for solutions of the bow conditions: $K=0,\pm1$, the cases $K=\pm1$ are gauge equivalent, so we only need to consider one of them. This gives us 2 cases to consider: the `trivial compensating gauge' in which $K=0$, and the `non-trivial compensating gauge' where $K=1$, we will consider each of these in turn, and show that all $U(1)$-symmetric bow data of type $(2,2,2)$ can be classified by its precise symmetry.

\begin{remark}
   The action:
   \begin{equation}\label{eq:Reflections}
       F:\begin{pmatrix}
           T^0(s) \\ \vec{T}(s) \\ B_\pm \\ I_\pm \\ J_\pm
       \end{pmatrix}\mapsto\begin{pmatrix}
           -T^0(-s) \\ -\vec{T}(-s) \\ \mp B_\mp \\ I_\mp \\ J_\mp
       \end{pmatrix}
   \end{equation}
   also leaves the bow conditions invariant, and corresponds with a parity transformation $(\vec{t},\tau)\mapsto(-\vec{t},-\tau)$ in the Taub-NUT space. This is not an action under which we wish to find invariants, but will be used in the classification of additional solutions.
\end{remark}

We will see that valid solutions only exist if $\phi=\pm\frac{\theta}{2}$, with the $\pm$ cases related by the action \eqref{eq:Reflections}, as such, we will only consider $\phi=\frac{\theta}{2}$.

\subsection{Trivial compensating gauge transformation}\label{subsection: Trivial compensating gauge}

In the case $K=0$, the compensating gauge transformation reduces to
\begin{equation}\label{eq:Trivial compensating gauge transformation}
    g=e^{i\omega}\operatorname{id}
\end{equation}
and the symmetry equation becomes:
\begin{equation}\label{eq:Trivial compensating gauge action}
    (\varphi\circ g)\cdot\begin{pmatrix}
        \alpha_P \\ \beta_P \\ B_\pm \\ I_\pm \\ J_\pm
    \end{pmatrix}=\begin{pmatrix}
        \alpha_P \\ \beta_P e^{i\theta} \\ B_\pm e^{i(\frac{\theta}{2}\pm\frac{\theta}{2})} \\ I_\pm e^{i(\frac{\theta}{2}\pm\psi-\omega)} \\ J_\pm e^{i(\frac{\theta}{2}\mp\psi+\omega)}
    \end{pmatrix}=\begin{pmatrix}
        \alpha_P \\ \beta_P \\ B_\pm \\ I_\pm \\ J_\pm
    \end{pmatrix}
\end{equation}
for all $\theta\in[0,2\pi)$. We have solutions for $\psi=\pm\frac{\theta}{2}$, and $\omega=0$. In all cases $\beta_P=0$ for all $s\in\mathcal{I}_P$, and so Nahm's equations simplify to
\begin{equation}\label{eq:Simplified Nahm's equations for trivial compensating gauge}
    \frac{d}{ds}\alpha_P(s)=0
\end{equation}
so, $\alpha_P$ is a constant on each interval, which can be made Hermitian by a choice of gauge. We can also see that $B_+=0$. In the 2 cases for $\psi$, the bow data must take the form:
\begin{align*}
    \begin{pmatrix}
        \alpha_P, & 0, & 0, & B_-, & 0, & I_-, & J_+, & 0
    \end{pmatrix}\qquad&:\qquad \psi=\frac{\theta}{2}\\
    \begin{pmatrix}
        \alpha_P, & 0, & 0, & B_-, & I_+, & 0, &0, & J_-
    \end{pmatrix}\qquad&:\qquad \psi=-\frac{\theta}{2}.
\end{align*}

% The action $F$ \eqref{eq:Reflections} maps a set of bow data $\mathcal{BD}$ symmetric under the action $\varphi(\theta;\phi=\frac{\theta}{2};\psi=\frac{\theta}{2})$ to a set of bow data $F(\mathcal{BD})=\tilde{\mathcal{BD}}$ symmetric under the action $\varphi(\theta;\phi=-\frac{\theta}{2},\psi=-\frac{\theta}{2})$. Therefore, if we solve the bow conditions for one of these cases, we can construct the bow data for the other by an application of $F$. A similar argument relates the other 2 cases. We therefore only need to solve the bow conditions in 2 of the 4 cases: $\phi=\frac{\theta}{2}$, $\psi=\pm\frac{\theta}{2}$.

For $\psi=\frac{\theta}{2}$, the matching conditions become
\begin{equation}
    \begin{split}
        2\alpha_1=-B_-^\dagger B_-+2q\\
        2\alpha_1-2\alpha_2=I_-I_-^\dagger\\
        2\alpha_2-2\alpha_3=-J_+^\dagger J_+\\
        2\alpha_3=-B_-B_-^\dagger+2q
    \end{split}
\end{equation}
we can therefore fix $\alpha_P$, and we are left solving
\begin{equation}
    I_-I_-^\dagger-J_+^\dagger J_++[B_-^\dagger,B_-]=0.
\end{equation}
Similarly, when $\psi=-\frac{\theta}{2}$, we have
\begin{equation}
    \begin{split}
        2\alpha_1=-B_-^\dagger B_-+2q\\
        2\alpha_1-2\alpha_2=-J_-^\dagger J_-\\
        2\alpha_2-2\alpha_3=I_+I_+^\dagger\\
        2\alpha_3=-B_-B_-^\dagger+2q
    \end{split}
\end{equation}
and we are left solving
\begin{equation}
    I_+I_+^\dagger-J_-^\dagger J_-+[B_-^\dagger,B_-]=0.
\end{equation}
In both cases, solving the bow conditions amounts to solving the equation
\begin{equation}\label{eq:ADHM equation for trivial compensating gauge bow data}
    II^\dagger-J^\dagger J+[B^\dagger,B]=0
\end{equation}
so, we can solve both cases simultaneously. The reality condition implies, the matching data take the form
\begin{align}
    B=\begin{pmatrix}
        ib_0+b_3 & b_1-ib_2 \\ b_1+ib_2 & ib_0-b_3
    \end{pmatrix}e^{i\mu}=\begin{pmatrix}
        Z & W^* \\ W & -Z^*
    \end{pmatrix}e^{i\mu}\quad&:\quad b_\alpha\in\mathbb{R}_+,\quad \mu\in[0,2\pi),\quad Z,W\in\mathbb{C}\\
    I=\begin{pmatrix}
        \gamma_1 e^{i\chi_1} \\ \gamma_2 e^{i\chi_2}
    \end{pmatrix}\quad,\quad J=\begin{pmatrix}
        \gamma_1 e^{i\chi_1}&\gamma_2 e^{i\chi_2}
    \end{pmatrix}\quad&:\quad \gamma_j\in\mathbb{R}_+,\quad \chi_j\in[0,2\pi).
\end{align}
Considering \eqref{eq:ADHM equation for trivial compensating gauge bow data}, we see that
\begin{equation}
    II^\dagger-J^\dagger J+[B^\dagger,B]=\begin{pmatrix}
        0 & 2i\gamma_1\gamma_2\sin(\chi_1-\chi_2) \\ 2i\gamma_1\gamma_2\sin(\chi_2-\chi_1) & 0
    \end{pmatrix}=0;
\end{equation}
since $\gamma_1,\gamma_2>0$, we must have that $\chi_1=\chi_2=:\chi$.

The matching data therefore take the form
\begin{equation}
    I=\begin{pmatrix}
        \gamma_1 \\ \gamma_2
    \end{pmatrix}e^{i\chi} \quad,\quad J=\begin{pmatrix}
        \gamma_1 & \gamma_2
    \end{pmatrix}e^{i\chi}\quad,\quad B=\begin{pmatrix}
        Z & W^* \\ W & -Z^*
    \end{pmatrix}e^{i\mu}
\end{equation}
for $\gamma_2,\gamma_2\in\mathbb{R}_+$, $\mu,\chi\in[0,2\pi)$, and $(Z,W)\in\mathbb{C}^2\setminus\lbrace0\rbrace$. The Nahm data take the form
\begin{equation}
    \begin{split}
        2\alpha_1=2\alpha_3=\left(2q-|Z|^2-|W|^2\right)\operatorname{id}\\
        2\alpha_2=-\operatorname{diag}\left(\gamma_1^2,\gamma_2^2\right)+\left(2q-|Z|^2-|W|^2\right)\operatorname{id}
    \end{split}\qquad:\qquad \psi=\frac{\theta}{2}
\end{equation}
and
\begin{equation}
    \begin{split}
        2\alpha_1=2\alpha_3=\left(2q-|Z|^2-|W|^2\right)\operatorname{id}\\
        2\alpha_2=\operatorname{diag}\left(\gamma_1^2,\gamma_2^2\right)+\left(2q-|Z|^2-|W|^2\right)\operatorname{id}
    \end{split}\qquad:\qquad \psi=-\frac{\theta}{2}.
\end{equation}

\subsection{Non-trivial compensating gauge and global phase action $\psi\neq0$}\label{subsection: Non-trivial compensating gauge and psi non-zero}

We now consider $K=1$, in which the compensating gauge transformation is
\begin{equation}
    g=\exp\left(i\left(\frac{\theta}{2}\sigma^2+\omega\right)\right)=\rho e^{i\omega}
\end{equation}
then, the symmetry equations for the bow data are:
\begin{equation}
    (\varphi\circ g)\cdot\begin{pmatrix}
        \alpha_P \\ \beta_P \\ B_\pm \\ I_\pm  \\ J_\pm
    \end{pmatrix}=\begin{pmatrix}
        \rho^{-1}\alpha_P\rho \\ \rho^{-1}\beta_P\rho e^{i\theta} \\ \rho^{-1}B_\pm\rho e^{i(\frac{\theta}{2}\pm\frac{\theta}{2})} \\ \rho^{-1} I_\pm e^{i(\frac{\theta}{2}\pm\psi-\omega)} \\ J_\pm\rho e^{i(\frac{\theta}{2}\mp\psi+\omega)}
    \end{pmatrix}=\begin{pmatrix}
        \alpha_P \\ \beta_P \\ B_\pm \\ I_\pm \\ J_\pm
    \end{pmatrix}.
\end{equation}
There are valid solutions for $\psi=0,\pm\theta$, and $\omega=0$. For now we will focus on $\psi\neq0$.
% we will, for now assume that $\psi\neq0$, then the only values of $\phi,\psi,\omega$ that allow for valid solutions are: $\phi=\pm\frac{\theta}{2}$, $\psi=\pm\theta$, and $\omega=0$, there is also a pair of valid solutions for $\phi=\pm\frac{\theta}{2}$, $\psi=0$, $\omega=0$, but we will address these separately in the next section. By a similar argument as in the trivial compensating gauge case, we only need to consider $\phi=\frac{\theta}{2}$.

The solution to the symmetry equations and reality condition for the Nahm and bifundamental data, up to a choice of gauge, is
\begin{align}
    \alpha_1(s)=u_1(s)\operatorname{id}+v_1(s)\sigma^2\qquad&;\qquad\beta_1(s)=w_1(s)(\sigma^3+i\sigma^1)\label{eq:alpha_1,beta_1}\\
    \alpha_2(s)=u_2(s)\operatorname{id}+v_2(s)\sigma^2\qquad&;\qquad \beta_2(s)=w_2(s)(\sigma^3+i\sigma^1)\label{eq:alpha_2,beta_3}\\
    \alpha_3(s)=u_1(-s)\operatorname{id}-v_1(-s)\sigma^2\qquad&;\qquad \beta_3(s)=w_1(-s)(\sigma^3+i\sigma^1)\label{eq:alpha_3,beta_3}\\
    B_+=b_0\operatorname{id}\qquad&;\qquad B_-=b_1(\sigma^3+i\sigma^1)\label{eq:B_+,B_-}
\end{align}
for $u_P,v_P:\mathcal{I}_P\rightarrow\mathbb{R}$, $w_P:\mathcal{I}_P\rightarrow\mathbb{C}$, $b_0\in\mathbb{C}$, and $b_1\in\mathbb{R}$, we must also have that $u_2,w_2$ are even functions, and $v_2$ is odd by the reality condition. There is an alternative solution to the reality condition for the bifundamental data in which $B_+=b_0\operatorname{id}+b_2\sigma^2$, $B_-=0$, however this doesn't produce solutions. Nahm's equations for the symmetric data are:
\begin{equation}
    u_P^\prime=0\qquad,\qquad v_P^\prime+2w_Pw_P^*=0\qquad;\qquad w_P^\prime+2v_Pw_P=0
\end{equation}
so, $u_P$ must be constant on each interval, and we have the general solutions
\begin{align}
    v_1(s)=-\frac{1}{2}\kappa_1\tan(\kappa_1(s+s_1))\qquad&;\qquad w_1(s)=\frac{1}{2}\kappa_1\sec(\kappa_1(s+s_1))e^{i\chi_1}\\
    v_2(s)=-\frac{1}{2}\kappa_2\tan(\kappa_2 s)\qquad&;\qquad w_2(s)=\frac{1}{2}\kappa_2\sec(\kappa_2 s)e^{i\chi_2}
\end{align}
for $s_1,\kappa_1,\kappa_2\in\mathbb{R}$, which we constrain to ensure the Nahm data are continuous on each interval, therefore we impose that
\begin{equation}\label{eq:psi=theta parameter constraints}
    0<\kappa_1<\frac{\pi}{l-2\lambda}\quad,\quad 0<\kappa_2<\frac{\pi}{2\lambda}\quad,\quad \frac{l}{2}-\frac{\pi}{2\kappa_1}<s_1<\frac{l}{2}.
\end{equation}

The solutions for the fundamental data depend on the value of $\psi$; let $V_1=\begin{pmatrix}
    i \\ 1
\end{pmatrix}$, and $V_2=\begin{pmatrix}
    -i \\ 1
\end{pmatrix}$, then, for the 2 possible values of $\psi$, the fundamental data take the form
\begin{itemize}
    \item for $\psi=-\theta$, we have:
    $
        I_+=\gamma e^{i\nu_+}V_1,\quad I_-=0,\quad J_+=0,\quad J_-=\gamma e^{i\nu_-}V_2^\dagger
    $
    \item for $\psi=\theta$, we have:
    $
        I_+=0,\quad I_-=\gamma e^{i\nu_-}V_1,\quad J_+=\gamma e^{i\nu_+}V_2^\dagger,\quad J_-=0
    $
\end{itemize}
For $\psi=\pm\theta$, the matching conditions become:
\begin{align}
    2u_1=|b_0|^2-2|b_1|^2+q\label{eq:psi=theta matching conditions equation 1}\\
    \kappa_1\tan(\kappa_1(\frac{l}{2}-s_1))=2|b_1|^2\label{eq:psi=theta matching conditions equation 2}\\
    \kappa_1\sec(\kappa_1(\frac{l}{2}-s_1))e^{i\chi_1}=2b_0b_1\label{eq:psi=theta matching conditions equation 3}\\
    2u_1-2u_2=\pm\gamma^2\label{eq:psi=theta matching conditions equation 4}\\
    \kappa_1\tan(\kappa_1(s_1-\lambda))+\kappa_2\tan(\kappa_2\lambda)=\gamma^2\label{eq:psi=theta matching conditions equation 5}\\
    \kappa_1\sec(\kappa_1(s_1-\lambda))e^{i\chi_1}-\kappa_2\sec(\kappa_2\lambda)e^{i\chi_2}=0.\label{eq:psi=theta matching conditions equation 6}
\end{align}
Let $b_0=\sqrt{2}R\sin(a)e^{i\mu_0}$ and $b_1=R\cos(a)$ for $R>0$ and $a\in[0,\frac{\pi}{2}]$. We note that we can choose a gauge in which $\nu_-=\nu_+$. Then, \eqref{eq:psi=theta matching conditions equation 1} and \eqref{eq:psi=theta matching conditions equation 4} give
\begin{equation}
    u_1=q-R^2\cos(2a)\qquad,\qquad u_2=q\mp\gamma^2-R^2\cos(2a)
\end{equation}
and we can solve \eqref{eq:psi=theta matching conditions equation 5} for $R^2$:
\begin{equation}
    R^2\cos^2(a)=\frac{1}{2}\kappa_1\tan\left(\kappa_1\left(\frac{l}{2}-s_1\right)\right).
\end{equation}
\eqref{eq:psi=theta matching conditions equation 6} and \eqref{eq:psi=theta matching conditions equation 3} imply that $\chi_1=\chi_2=\mu_0=:\mu$. We can rearrange \eqref{eq:psi=theta matching conditions equation 3} to
\begin{equation}
    \cot(a)=\sqrt{2}\sin\left(\kappa_1\left(\frac{l}{2}-s_1\right)\right)
\end{equation}
which can be solved for $a\in\left[\arctan\left(\frac{\sqrt{2}}{2}\right),\frac{\pi}{2}\right)$. And we can solve \eqref{eq:psi=theta matching conditions equation 6},
\begin{equation}\label{eq:psi=theta final equation}
    \kappa_1\sec(\kappa_1(s_1-\lambda))-\kappa_2\sec(\kappa_2\lambda)=0
\end{equation}
for $\kappa_2$: define the function $F:\left(0,\frac{\pi}{2\lambda}\right)\rightarrow\mathbb{R}_+$ by $F:x\mapsto x\sec(\lambda x)$, which is bijective. The inequalities \eqref{eq:psi=theta parameter constraints} imply
\begin{equation}
    -\frac{\pi}{2}<\kappa_1(s_1-\lambda)<\frac{\pi}{2}
\end{equation}
so, $\kappa_1\sec(\kappa_1(s_1-\lambda))$ is in the range of $F$. Therefore, there exists a unique $\kappa_2\in\left(0,\frac{\pi}{2\lambda}\right)$ solving \eqref{eq:psi=theta final equation} for any $0<\lambda<\frac{l}{2}$, and any $\kappa_1,s_1$ constrained as above.

It remains to solve \eqref{eq:psi=theta matching conditions equation 5}, which can only be solved for $\gamma^2$ if the left-hand-side is positive. Considering \eqref{eq:psi=theta final equation}, we can rearrange \eqref{eq:psi=theta matching conditions equation 5} to
\begin{equation}\label{eq:psi=theta equation for gamma squared}
    \gamma^2=2\kappa_1\sec(\kappa_1(s_1-\lambda))\left(\cos\left(\frac{\kappa_1(s_1-\lambda)-\kappa_2\lambda}{2}\right)\sin\left(\frac{\kappa_1(s_1-\lambda)+\kappa_2\lambda}{2}\right)\right).
\end{equation}
From above, we know that $\sec(\kappa_1(s_1-\lambda))>0$, and that
\begin{equation}
    -\frac{\pi}{2}<\frac{\kappa_1(s_1-\lambda)-\kappa_2\lambda}{2}<\frac{\pi}{2}
\end{equation}
so $\cos\left(\frac{\kappa_1(s_1-\lambda)-\kappa_2\lambda}{2}\right)>0$. We also have that
\begin{equation}
    -\frac{\pi}{4}<\frac{\kappa_1(s_1-\lambda)+\kappa_2\lambda}{2}<\frac{\pi}{2}
\end{equation}
so, the sign of the right-hand-side of \eqref{eq:psi=theta equation for gamma squared} is entirely determined by the sign of $\kappa_1(s_1-\lambda)+\kappa_2\lambda$. Let $A=\kappa_1(s_1-\lambda)$ and $B=\kappa_2\lambda$, then

\begin{lemma}\label{A+B=0 iff s_1=0}
    Within the parameter space $A+B=0$ if and only if $s_1=0$; moreover, this gives $\kappa_1=\kappa_2$ and requires $0<\kappa_1<\frac{\pi}{l}$.
\end{lemma}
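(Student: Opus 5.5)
We work within the parameter space cut out by \eqref{eq:psi=theta parameter constraints} together with the relation \eqref{eq:psi=theta final equation}, which fixes $\kappa_2$ as a function of $(\kappa_1,s_1)$. The plan is to combine \eqref{eq:psi=theta final equation} with the hypothesis $A+B=0$ to pin down $\kappa_2$ and $s_1$, and then verify the converse directly.

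For the forward direction, suppose $A+B=0$, so $\kappa_1(s_1-\lambda)=-\kappa_2\lambda$. Since secant is even, $\sec(\kappa_1(s_1-\lambda))=\sec(\kappa_2\lambda)$. This quantity is finite and positive, because both arguments lie in $(-\frac{\pi}{2},\frac{\pi}{2})$, as shown just before the lemma and by $0<\kappa_2<\frac{\pi}{2\lambda}$. Substituting into \eqref{eq:psi=theta final equation} gives $(\kappa_1-\kappa_2)\sec(\kappa_2\lambda)=0$, hence $\kappa_1=\kappa_2$. Then $A+B=0$ becomes $\kappa_1(s_1-\lambda+\lambda)=\kappa_1 s_1=0$. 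Since $\kappa_1>0$, this forces $s_1=0$.

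For the converse, set $s_1=0$. Then \eqref{eq:psi=theta final equation} reads $\kappa_1\sec(\kappa_1\lambda)=\kappa_2\sec(\kappa_2\lambda)$, i.e.\ $F(\kappa_2)=\kappa_1\sec(\kappa_1\lambda)$. If $\kappa_1<\frac{\pi}{2\lambda}$, then $\kappa_1$ lies in the domain of $F$. Since $F$ is bijective (being strictly increasing on $(0,\frac{\pi}{2\lambda})$ as a product of positive increasing functions), the unique solution is $\kappa_2=\kappa_1$. This gives $A+B=-\kappa_1\lambda+\kappa_1\lambda=0$.

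It remains to derive the range of $\kappa_1$, which I expect to be the only slightly delicate point. Put $s_1=0$ into the constraint $\frac{l}{2}-\frac{\pi}{2\kappa_1}<s_1$ from \eqref{eq:psi=theta parameter constraints}. This gives $\frac{l}{2}<\frac{\pi}{2\kappa_1}$, i.e.\ $\kappa_1<\frac{\pi}{l}$. This bound is compatible with the other constraints: $\frac{\pi}{l}\le\frac{\pi}{l-2\lambda}$, and $\frac{\pi}{l}<\frac{\pi}{2\lambda}$ because $2\lambda<l$. The latter also guarantees that $\kappa_1$ lies in the domain of $F$, which justifies the converse step above.

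Conversely, every $\kappa_1\in(0,\frac{\pi}{l})$ with $s_1=0$ satisfies \eqref{eq:psi=theta parameter constraints}, so the locus $A+B=0$ is exactly $\{s_1=0,\ \kappa_1=\kappa_2\in(0,\frac{\pi}{l})\}$.
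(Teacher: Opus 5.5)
Your proof is correct and follows essentially the same route as the paper. For the forward direction, substituting $A+B=0$ into \eqref{eq:psi=theta final equation} gives $(\kappa_1-\kappa_2)\sec(\kappa_2\lambda)=0$, hence $\kappa_1=\kappa_2$ and then $s_1=0$. For the converse, you use injectivity of $x\mapsto x\sec(\lambda x)$ at $s_1=0$, with $\kappa_1<\frac{\pi}{l}$ read off from $\frac{l}{2}-\frac{\pi}{2\kappa_1}<s_1$. Applying injectivity on $(0,\frac{\pi}{2\lambda})$, the range where $\kappa_2$ actually lives, is in fact slightly more careful than the paper, which restricts $F$ to $(0,\frac{\pi}{l})$.
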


\begin{proof}
    If $A+B=0$, then $-\kappa_2\lambda=\kappa_1(s_1-\lambda)$, so by \eqref{eq:psi=theta final equation}
    \begin{equation*}
        \kappa_1\sec(\kappa_1(s_1-\lambda))-\kappa_2\sec(\kappa_2\lambda)=(\kappa_1-\kappa_2)\sec(\kappa_2\lambda)=0
    \end{equation*}
    therefore $\kappa_1=\kappa_2$, and $A+B=\kappa_1(s_1-\lambda)+\kappa_2\lambda=\kappa_1s_1=0$, therefore $s_1=0$.

    Conversely, if $s_1=0$, then $0<\kappa_1<\frac{\pi}{l}<\frac{\pi}{2\lambda}$, so \eqref{eq:psi=theta final equation} takes the form $F(\kappa_1)=F(\kappa_2)$ for $F:\left(0,\frac{\pi}{l}\right)\rightarrow\mathbb{R}_+$ defined by $F:x\mapsto x\sec(\lambda x)$, which is injective, so $\kappa_1=\kappa_2$. Therefore $A+B=\kappa_1(0-\lambda)+\kappa_1\lambda=0$.
\end{proof}

We note that $\kappa_2$ depends on $\kappa_1$ and $s_1$ from \eqref{eq:psi=theta final equation}, while $\kappa_1$ does not depend directly on $s_1$, but only in the definition of its domain. Differentiating \eqref{eq:psi=theta final equation} with respect to $s_1$, we have
\begin{equation}
    \kappa_1^2\tan(\kappa_1(s_1-\lambda))\sec(\kappa_1(s_1-\lambda))=\frac{d\kappa_2}{ds_1}\sec(\kappa_2\lambda)(1+\kappa_2\lambda\tan(\kappa_2\lambda))
\end{equation}
so
\begin{equation}
    \frac{d}{ds_1}(A+B)=\kappa_1\left(1-\frac{\kappa_1\lambda\tan(\kappa_1(s_1-\lambda))}{\sec(\kappa_2\lambda)(1+\kappa_2\lambda\tan(\kappa_2\lambda))}\right)
\end{equation}
in particular, by Lemma \ref{A+B=0 iff s_1=0}:
\begin{equation}
    \frac{d}{ds}\bigg|_{s_1=0}(A+B)=\kappa_1\left(1-\frac{\kappa_1\lambda\tan(\kappa_1\lambda)}{1+\kappa_1\lambda\tan(\kappa_1\lambda)}\right)>0
\end{equation}
So, $(A+B)(s_1)$ is an increasing function at $s_1=0$, and vanishes if and only if $s_1=0$ by Lemma \ref{A+B=0 iff s_1=0}. Since $A+B>0$ when $s_1=\lambda$, by continuity, $A+B>0$ for all $s_1>0$, and similarly $A+B<0$ for all $s_1<0$. Therefore \eqref{eq:psi=theta matching conditions equation 5} admits solutions if and only if $s_1>0$

\subsubsection*{Summary}\label{subsubsection: summary of the data}

 We therefore have a 4-parameter family of bow data, parameterised by $\mu_0,\nu_-\in[0,2\pi)$, $\kappa_1\in\left(0,\frac{\pi}{l-2\lambda}\right)$, and $s_1\in\left(\max\left(0,\frac{l}{2}-\frac{\pi}{2\kappa_1}\right),\frac{l}{2}\right)$, which explicitly takes the form:
\begin{align*}
        &\alpha_1(s)=\begin{pmatrix}
            q+\frac{\kappa_1}{2}\cot(2C) & i\frac{\kappa_1}{2}\tan(\kappa_1(s+s_1))\\
            -i\frac{\kappa_1}{2}\tan(\kappa_1(s+s_1)) & q+\frac{\kappa_1}{2}\cot(2C)
        \end{pmatrix}\\
        &\beta_1(s)=\frac{\kappa_1}{2}\sec(\kappa_1(s+s_1))e^{i\mu}\begin{pmatrix}
            1 & i \\ i & -1
        \end{pmatrix}\\
        &\alpha_2(s)=\begin{pmatrix}
            q+\frac{\kappa_1}{2}(\cot(2C)\mp\sec(A)(\sin(A)+\sin(B))) & i\frac{\kappa_2}{2}\tan(\kappa_2s)\\
            -i\frac{\kappa_2}{2}\tan(\kappa_2s) & q+\frac{\kappa_1}{2}(\cot(2C)\mp\sec(A)(\sin(A)+\sin(B)))
        \end{pmatrix}\\
        &\beta_2(s)=\frac{\kappa_2}{2}\sec(\kappa_2s)e^{i\mu}\begin{pmatrix}
            1 & i \\ i & -1
        \end{pmatrix}\\
        &\alpha_3(s)=\begin{pmatrix}
            q+\frac{\kappa_1}{2}\cot(2C) & i\frac{\kappa_1}{2}\tan(\kappa_1(s-s_1))\\
            -i\frac{\kappa_1}{2}\tan(\kappa_1(s-s_1)) & q+\frac{\kappa_1}{2}\cot(2C)
        \end{pmatrix}\\
        &\beta_3(s)=\frac{\kappa_1}{2}\sec(\kappa_1(s-s_1))e^{i\mu}\begin{pmatrix}
            1 & i \\ i & -1
        \end{pmatrix}\\
        & B_+=\sqrt{\kappa_1\csc(2C)}e^{i\mu}\operatorname{id},\quad B_-=\sqrt{\frac{\kappa_1\tan(C)}{2}}\begin{pmatrix}
            1 & i \\ i & -1
        \end{pmatrix}\\
        &I_\pm=0,\quad I_\mp=\sqrt{\kappa_1\sec(A)(\sin(A)+\cos(B))}e^{i\sigma}\begin{pmatrix}
            i \\ 1
        \end{pmatrix}\\
        &J_\pm=\sqrt{\kappa_1\sec(A)(\sin(A)+\cos(B))}e^{i\sigma}\begin{pmatrix}
            i & 1
        \end{pmatrix},\quad J_\mp=0
    \end{align*}
where $A=\kappa_1(s_1-\lambda)$, $B=\kappa_2\lambda$, $C=\kappa_1(\frac{l}{2}-s_1)$, and $\kappa_2$ solves
\begin{equation}
        \kappa_1\sec(\kappa_1(s_1-\lambda))-\kappa_2\sec(\kappa_2\lambda)=0.
\end{equation}

\subsection{Non-trivial compensating gauge and trivial global phase $\psi=0$}\label{subsection: non-trivial compensating gauge and psi=0}

Consider $K=1$ and $\psi=0$. The Nahm and bifundamental data take the same forms as in the $\psi=\pm\theta$ cases, \eqref{eq:alpha_1,beta_1}-\eqref{eq:B_+,B_-}. However, now the fundamental data take the form
\begin{equation}
    I_+=\gamma_+e^{i\sigma_+}\begin{pmatrix}
        -i \\ 1
    \end{pmatrix},\quad I_-=\gamma_-e^{i\sigma_-}\begin{pmatrix}
        -i \\ 1
    \end{pmatrix},\quad J_+=\gamma_-e^{i\sigma_-}\begin{pmatrix}
        1 & i
    \end{pmatrix},\quad J_-=-\gamma_+e^{i\sigma_+}\begin{pmatrix}
        1 & i
    \end{pmatrix}
\end{equation}
therefore the matching conditions become
\begin{align}
    &2u_1=b_0^2-2b_1^2+2q\label{eq:psi=0 equation 1}\\
    &\kappa_1\tan(\kappa_1(\frac{l}{2}-s_1))=2b_1^2\label{eq:psi=0 equation 2}\\
    &\kappa_1\sec(\kappa_1(\frac{l}{2}-s_1))e^{i\chi_1}=2b_0b_1e^{i\mu}\label{eq:psi=0 equation 3}\\
    &2u_1-2u_2=\gamma_-^2-\gamma_+^2\label{eq:psi=0 equation 4}\\
    &\kappa_1\tan(\kappa_1(s_1-\lambda))+\kappa_2\tan(\kappa_2\lambda)=-(\gamma_-^2+\gamma_+^2)\label{eq:psi=0 equation 5}\\
    &\kappa_1\sec(\kappa_1(s_1-\lambda))e^{i\chi_1}-\kappa_2\sec(\kappa_2\lambda)e^{i\chi_2}=-i\gamma_-\gamma_+e^{i(\sigma_-+\sigma_+)}.\label{eq:psi=0 equation 6}
\end{align}
Let
\begin{equation*}
    b_0=\sqrt{2}R\sin(a),\quad b_1=R\cos(a),\quad \gamma_-=\sqrt{2}r\sin(c),\quad \gamma_+=\sqrt{2}r\cos(c)
\end{equation*}
for $r,R\in\mathbb{R}_+$, and $a,c\in[0,\frac{\pi}{2}]$. Then \eqref{eq:psi=0 equation 1} and \eqref{eq:psi=0 equation 4} imply
\begin{equation*}
    u_1=q-R^2\cos(2a)\qquad,\qquad u_2=q-R^2\cos(2a)+r^2\cos(2c).
\end{equation*}
\eqref{eq:psi=0 equation 3} implies that $\chi_1=\mu$, so \eqref{eq:psi=0 equation 2} and \eqref{eq:psi=0 equation 3} become
\begin{align}
    &\kappa_1\tan(\kappa_1(\frac{l}{2}-s_1))=2R^2\cos^2(a)\label{eq:psi=0 equation 7}\\
    &\kappa_1\sec(\kappa_1(\frac{l}{2}-s_1))=\sqrt{2}R^2\sin(2a).\label{eq:psi=0 equation 8}
\end{align}
\eqref{eq:psi=0 equation 7} can be solved for $R^2$ if $s_1<\frac{l}{2}$, while \eqref{eq:psi=0 equation 8} can be rearranged to
\begin{equation}
    \cot(a)=\sqrt{2}\sin\left(\kappa_1\left(\frac{l}{2}-s_1\right)\right)
\end{equation}
which can be solved for $a\in\left[\arctan\left(\frac{\sqrt{2}}{2}\right),\frac{\pi}{2}\right)$. We are therefore left solving \eqref{eq:psi=0 equation 5} and \eqref{eq:psi=0 equation 6}, taking the forms
\begin{align}
    &\kappa_1\tan(\kappa_1(s_1-\lambda))+\kappa_2\tan(\kappa_2\lambda)=-2r^2\label{eq:psi=0 equation 10}\\
    &\kappa_1\sec(\kappa_1(s_1-\lambda))e^{i\mu}-\kappa_2\sec(\kappa_2\lambda)e^{i\chi_2}=-2ir^2\sin(2c)e^{i(\sigma_-+\sigma_+)}.\label{eq:psi=0 equation 11}
\end{align}
\eqref{eq:psi=0 equation 10} can only be solved if the left-hand-side is negative, therefore $s_1<\lambda$, and
\begin{equation}
    \kappa_1\tan(\kappa_1(\lambda-s_1))>\kappa_2\tan(\kappa_2\lambda).\label{eq:psi=0 equation 12}
\end{equation}
\eqref{eq:psi=0 equation 11} can be rearranged into the form
\begin{equation}
    l_1e^{i\phi_1}+l_2e^{i\phi_2}+l_3\sin(2c)=0\label{eq:psi=0 equation 13}
\end{equation}
where
\begin{align}
    l_1=\kappa_1\sec(\kappa_1(\lambda-s_1))\qquad&,\qquad \phi_1=\mu-\sigma_--\sigma_+-\frac{\pi}{2}\label{eq:psi=0 equation 14}\\
    l_2=\kappa_2\sec(\kappa_2\lambda)\qquad&,\qquad \phi_2=\chi_2-\sigma_--\sigma_++\frac{\pi}{2}\label{eq:psi=0 equation 15}\\
    l_3=\kappa_1\tan(\kappa_1(\lambda-s_1))-\kappa_2\tan(\kappa_2\lambda)\qquad&.\label{eq:psi=0 equation 16}
\end{align}
An equation of the form \eqref{eq:psi=0 equation 13} appears in \cite{Cork2026ACalorons}, and can be solved for $\phi_1,\phi_2$ (and therefore $\mu,\chi_1$) if and only if the triangle inequalities
\begin{equation}
    l_1\le l_2+l_3\sin(2c),\quad l_2\le l_3\sin(2c)+l_1,\quad l_3\sin(2c)\le l_1+l_2\label{eq:psi=0 equation 17}
\end{equation}
are satisfied, along with the continuity constraints on $\kappa_1,\kappa_2,s_1$. The triangle inequalities are satisfied on an open interval $c\in(c_-,c_+)\subset[0,\frac{\pi}{4}]$ if and only if
\begin{equation}\label{eq:Triangle inequality}
    l_3>|l_1-l_2|.
\end{equation}
Therefore, considering \eqref{eq:psi=theta parameter constraints} and the refinement that $s_1<\lambda$, we wish to find the subset $S\subset\mathbb{R}^3$ of $(\kappa_1,\kappa_2,s_1)$ for which the following inequalities all hold:
\begin{align}
    &0<\kappa_1<\frac{\pi}{l-2\lambda}\label{eq:psi=0 equation 18}\\
    &0<\kappa_2<\frac{\pi}{2\lambda}\label{eq:psi=0 equation 19}\\
    &0<\nu:=\lambda-s_1<\frac{\pi}{2\kappa_1}-\left(\frac{l}{2}-\lambda\right)\label{eq:psi=0 equation 20}\\
    &0<\kappa_1\tan(\kappa_1\nu)-\kappa_2\tan(\kappa_2\lambda)-\left|\kappa_1\sec(\kappa_1\nu)-\kappa_2\sec(\kappa_2\lambda)\right|.\label{eq:psi=0 equation 21}
\end{align}
Inequality \eqref{eq:psi=0 equation 21} implies \eqref{eq:psi=0 equation 13}, and is equivalent to
\begin{align}
    0<F_\lambda(\kappa_2)-F_\nu(\kappa_1)\label{eq:psi=0 equation 22}\\
    0<G_\nu(\kappa_1)-G_\lambda(\kappa_2)\label{eq:psi=0 equation 23}
\end{align}
where
\begin{equation}
    F_\alpha(x):=\frac{x}{\sec(\alpha x)+\tan(\alpha x)}\qquad,\qquad G_\alpha(x):=\frac{x}{\sec(\alpha x)-\tan(\alpha x)}.\label{eq:psi=0 equation 24}
\end{equation}
For all $\alpha,\beta>0$, and for all $x\in\left(0,\frac{\pi}{2\alpha}\right)$, we have that:
\begin{enumerate}[label=\textbf{P.\arabic*},ref=P.\arabic*]
    \item \label{P1} $F_\alpha(x),G_\alpha(x)>0$
    \item \label{P2} $F_\alpha(0)=F_\alpha\left(\frac{\pi}{2\alpha}\right)=G_\alpha(0)=0$
    \item \label{P3} $G_\alpha^\prime(x)>0$, in particular $G_\alpha(x)$ is an increasing surjective function $\left(0,\frac{\pi}{2\alpha}\right)\rightarrow(0,\infty)$
    \item \label{P4} $F_\alpha$ has at most one turning point at $x=\frac{\Lambda}{\alpha}$, where $\Lambda\approx0.739285$ solves $\Lambda=\cos\Lambda$, this turning point is a maximum
    \item \label{P5} $\partial_\alpha F_\alpha<0$ and $\partial_\alpha G_\alpha>0$
    \item \label{P6} $F_{\beta}\left(\frac{\alpha}{\beta}x\right)=\frac{\alpha}{\beta}F_\alpha(x)$ and $G_\beta\left(\frac{\alpha}{\beta}x\right)=\frac{\alpha}{\beta}G_\alpha(x)$.
\end{enumerate}
We note that \eqref{eq:psi=0 equation 20} refines \eqref{eq:psi=0 equation 18} to $0<\kappa_1<\frac{\pi}{l-2s_1}=\frac{\pi}{l-2\lambda+2\nu}$ and, together with \eqref{eq:psi=0 equation 19} implies that $\kappa_1,\kappa_2$ lie in some subset of $\left(0,\frac{\pi}{2\nu}\right)$ and $\left(0,\frac{\pi}{2\lambda}\right)$ respectively, so (\ref{P1}-\ref{P6}) hold.

Now, define the functions
\begin{align}
    z_\Omega(\kappa_1,\kappa_2):=F_\lambda(\kappa_2)-F_\nu(\kappa_1)\label{eq:psi=0 equation 25}\\
    z_\Gamma(\kappa_1,\kappa_2):=G_\nu(\kappa_1)-G_\lambda(\kappa_2)\label{eq:psi=0 equation 26}
\end{align}
which define smooth surfaces in $\mathbb{R}^3$, so \eqref{eq:psi=0 equation 22} and \eqref{eq:psi=0 equation 23} are satisfied when these surfaces are in the upper half space.

For fixed $\alpha$, since $G_\alpha:\left(0,\frac{\pi}{2\alpha}\right)\rightarrow(0,\infty)$ is bijective, for all $(\lambda,\nu,\kappa_1)$ there exists a unique $\kappa_+(\lambda,\nu,\kappa_1)\in\left(0,\frac{\pi}{2\lambda}\right)$ such that $z_{\Gamma}(\kappa_1,\kappa_+)=0$, and since $G_\alpha$ is increasing we see that $z_\Gamma(\kappa_1,\kappa_2)>0$ if and only if $\kappa_2<\kappa_+$, which satisfies \eqref{eq:psi=0 equation 23}.

It therefore remains to solve \eqref{eq:psi=0 equation 22}. $z_\Omega$ has a single saddle point $P$ when $(\kappa_1,\kappa_2)=\left(\frac{\Lambda}{\nu},\frac{\Lambda}{\lambda}\right)$, at which
\begin{equation}
    P=z_\Omega\left(\frac{\Lambda}{\nu},\frac{\Lambda}{\lambda}\right)=\left(\frac{1}{\lambda}-\frac{1}{\nu}\right)(1-\sin\Lambda)\label{eq:psi=0 equation 27}
\end{equation}
$(1-\sin\Lambda)>0$, and the saddle point slopes toward the negative $z$-axis in $\kappa_2$ and the positive $z$-axis in $\kappa_1$. It is therefore important to consider the relative locations of $\lambda$ and $\nu$.

\subsubsection*{$\nu>\lambda$}

Here, we can see that $P>0$ in \eqref{eq:psi=0 equation 27}, and since $z_\Omega$ always increases in the $\kappa_1$ direction, we need only to refine the $\kappa_2$ interval to solve \eqref{eq:psi=0 equation 23} for all $0<\kappa_1<\frac{\pi}{l-2\lambda+2\nu}<\frac{\pi}{2\nu}<\frac{\pi}{2\lambda}$. Note that
\begin{equation}
    z_{\Omega}(\kappa_1,\kappa_1)=F_\lambda(\kappa_1)-F_\nu(\kappa_1)>0\label{eq:psi=0 equation 28}
\end{equation}
since $F_\alpha(x)$ is decreasing in $\alpha$. On the other hand
\begin{equation}
    z_\Omega(\kappa_1,0)=z_\Omega(\kappa_1,\frac{\pi}{2\lambda})=-F_\nu(\kappa_1)<0.\label{eq:psi=0 equation 29}
\end{equation}
Therefore, by the intermediate value theorem, and since $F$ has at most a single turning point, for all $\kappa_1\in\left(0,\frac{\pi}{l-2\lambda+2\nu}\right)$ there exist unique $0<K_-(\kappa_1)<\kappa_1$ and $\kappa_1<K_+(\kappa_1)<\frac{\pi}{2\lambda}$ such that $z_\Omega(\kappa_1,K_\pm)=0$, and $z_\Omega(\kappa_1,\kappa_2)>0$ for all $K_-<\kappa_2<K_+$.

Combining this with the condition from inequality \eqref{eq:psi=0 equation 23}, this constrains \begin{equation*}K_-<\kappa_2<\min\left(K_+,\kappa_+\right).\end{equation*}

Consider the line $\kappa_2=\frac{\nu}{\lambda}\kappa_1$, then by \ref{P6}
\begin{align}
    &z_\Omega\left(\kappa_1,\frac{\nu}{\lambda}\kappa_1\right)=F_\lambda\left(\frac{\nu}{\lambda}\kappa_1\right)-F_\nu(\kappa_1)=\left(\frac{\nu}{\lambda}-1\right)F_\nu(\kappa_1)>0\\
    &z_\Gamma\left(\kappa_1,\frac{\nu}{\lambda}\kappa_1\right)=G_\nu(\kappa_1)-G_\lambda\left(\frac{\nu}{\lambda}\kappa_1\right)=\left(1-\frac{\nu}{\lambda}\right)G_\nu(\kappa_1)<0
\end{align}
therefore $\kappa_+(\kappa_1)<\frac{\nu}{\lambda}\kappa_1<K_+(\kappa_1)$ for all $\kappa_1$, so we can refine the constraint to
\begin{equation}
    K_-<\kappa_2<\kappa_+.
\end{equation}
Since $z_\Gamma(\kappa_1,\kappa_1)=G_\nu(\kappa_1)-G_\lambda(\kappa_1)>0$, we can also see that $K_-<\kappa_1<\kappa_+$ for all $\kappa_1$, so the above region is non-empty.

\subsubsection*{$\nu=\lambda$}

In this case the saddle point lies on the curve $\kappa_1=\kappa_2$, and $P=0$. In this case inequality \eqref{eq:psi=0 equation 23} is equivalent to $\kappa_1>\kappa_2$ since $G_\lambda$ is increasing. This condition allows for solutions of \eqref{eq:psi=0 equation 22} so long as $\kappa_1$ is large enough that it is in a region where $F_\lambda$ is strictly decreasing, therefore $\kappa_1>\frac{\Lambda}{\lambda}$. Because we also require that $\kappa_1<\frac{\pi}{l-2\lambda+2\nu}=\frac{\pi}{l}$, this places a constraint on $\lambda$:
\begin{equation}
    \lambda>\frac{l\Lambda}{\pi}.\label{eq:psi=0 equation 30}
\end{equation}
As long as this constraint on $\kappa_1$, and therefore on $\lambda$, is held, then there must exist exactly one value $0<\kappa_-<\frac{\Lambda}{\lambda}$, such that $F_\lambda(\kappa_-)=F_\lambda(\kappa_1)$, and therefore we have that $z_\Omega(\kappa_1,\kappa_2)>0$ for all $\kappa_-<\kappa_2<\kappa_1$.

\subsubsection*{$0<\nu<\lambda$}

Here the saddle point has $P<0$, so the $\kappa_1$ region will need to be constrained for \eqref{eq:psi=0 equation 22} to be solved. For any $\kappa_2$ we have that
\begin{equation}
    z_\Omega(0,\kappa_2)=z_\Omega\left(\frac{\pi}{2\nu},\kappa_2\right)=F_\lambda(\kappa_2)>0.\label{eq:psi-0 equation 31}
\end{equation}
Therefore, for all $\kappa_2\in\left(0,\frac{\pi}{2\lambda}\right)$ there exist unique $K_-(\kappa_2),K_+(\kappa_2)\in\left(0,\frac{\pi}{2\nu}\right)$ such that
\begin{equation}
    0<K_-<\frac{\Lambda}{\nu}<K_+<\frac{\pi}{2\nu}
\end{equation}
and such that
\begin{equation}
    z_\Omega\left(K_-,\kappa_2\right)=z_\Omega\left(K_+,\kappa_2\right)=0.
\end{equation}
Therefore, the surface define by $z_\Omega$ is in the upper half-space for $\kappa_1\in(0,K_-)\cup(K_+,\frac{\pi}{2\nu})$.

For a fixed $\alpha$, since $G_\alpha:\left(0,\frac{\pi}{2\alpha}\right)\rightarrow(0,\infty)$ is bijective, then for all $\lambda,\nu,\kappa_2$ there exists a unique $\kappa_-(\lambda,\nu,\kappa_2)\in\left(0,\frac{\pi}{2\nu}\right)$ such that $z_\Gamma(\kappa_-,\kappa_2)=0$, and since $G_\alpha$ is increasing $z_\Gamma(\kappa_1,\kappa_2)>0$ if and only if $\kappa_1>\kappa_-$.

Consider the line $\kappa_1=\kappa_2$, then
\begin{align}
    &z_\Omega(\kappa_2,\kappa_2)=F_\lambda(\kappa_2)-F_\nu(\kappa_2)<0\\
    &z_\Gamma(\kappa_2,\kappa_2)=G_\nu(\kappa_2)-G_\lambda(\kappa_2)<0
\end{align}
so, for all $\kappa_2$
\begin{equation}
    K_-(\kappa_2)<\kappa_2<\kappa_-(\kappa_2).
\end{equation}
So, we may discount the region $\kappa_1<K_-$, since $z_\Gamma<0$ in this region.

Now consider $\kappa_1=\frac{\lambda}{\nu}\kappa_2$, then
\begin{align}
    &z_\Omega\left(\frac{\lambda}{\nu}\kappa_2,\kappa_2\right)=F_\lambda(\kappa_2)-F_\nu\left(\frac{\lambda}{\nu}\kappa_2\right)=\left(1-\frac{\lambda}{\nu}\right)F_\lambda(\kappa_2)<0\\
    &z_\Gamma\left(\frac{\lambda}{\nu}\kappa_2,\kappa_2\right)=G_\nu\left(\frac{\lambda}{\nu}\kappa_2\right)-G_\lambda(\kappa_2)=\left(\frac{\lambda}{\nu}-1\right)G_\lambda(\kappa_2)>0
\end{align}
so, for all $\kappa_2$, we have
\begin{equation}
    \kappa_-(\kappa_2)<\frac{\lambda}{\nu}\kappa_2<K_+(\kappa_2).
\end{equation}
Since $\kappa_1>K_+$ is a stronger constraint than $\kappa_1>\kappa_-$, we only need to consider $\kappa_1>K_+$.

Therefore, it remains to check that $K_+(\kappa_2)\in\left(0,\frac{\pi}{2\nu+l-2\lambda}\right)\subset\left(0,\frac{\pi}{2\nu}\right)$, the allowed region for $\kappa_1$. Since $F_\alpha(x)$ has a single turning point, $K_+$ is minimised when $\kappa_2=\frac{\Lambda}{\lambda}$, at which
\begin{equation}
    F_\nu(K_+)=F_\lambda\left(\frac{\Lambda}{\lambda}\right)=\frac{1}{\lambda}(1-\sin\Lambda)
\end{equation}
since $K_+>\frac{\Lambda}{\nu}$, and $\frac{\Lambda}{\nu}$ is a local maximum of $F_\nu$, we have that $K_+<\frac{\pi}{2\nu+l-2\lambda}=\frac{\pi}{L}$ if and only if
\begin{equation}\label{eq:Final inequality}
    F_\nu\left(\frac{\pi}{L}\right)<\frac{1}{\lambda}\left(1-\sin\Lambda\right).
\end{equation}
If this holds, then there exists a subset $\kappa_1\in\left(K_+,\frac{\pi}{L}\right]\subset\left(0,\frac{\pi}{2\nu}\right)$ on which there can exist solutions. However: 
\begin{equation}
    z_\Omega\left(\frac{\pi}{L},0\right)=z_\Omega\left(\frac{\pi}{L},\frac{\pi}{2\lambda}\right)=-F_\nu\left(\frac{\pi}{L}\right)<0
\end{equation}
so, we must constrain the region on which $\kappa_2$ can take values. We note that
\begin{equation}
    z_\Omega\left(\frac{\pi}{L},\frac{\Lambda}{\lambda}\right)=\frac{1}{\lambda}(1-\sin\Lambda)-F_\nu\left(\frac{\pi}{L}\right)>0
\end{equation}
therefore, there exist unique $\kappa_2^-,\kappa_2^+\in\left(0,\frac{\pi}{2\lambda}\right)$ such that
\begin{equation}
    0<\kappa_2^-<\frac{\Lambda}{\lambda}<\kappa_2^+<\frac{\pi}{2\lambda}
\end{equation}
and such that
\begin{equation}
    z_\Omega\left(\frac{\pi}{L},\kappa_2^-\right)=z_\Omega\left(\frac{\pi}{L},\kappa_2^+\right)=0.
\end{equation}
Therefore, if \eqref{eq:Final inequality} holds, there exist subsets
\begin{equation}
    \left(K_+,\frac{\pi}{L}\right)\subset\left(0,\frac{\pi}{2\nu}\right)\qquad,\qquad \left(\kappa_2^-,\kappa_2^+\right)\subset\left(0,\frac{\pi}{2\lambda}\right)
\end{equation}
on which $\kappa_1,\kappa_2$ can take values for which the triangle inequality is satisfied.

\subsubsection*{Summary}\label{subsubsection: Summary of results for psi=0}

We therefore have that the inequalities \eqref{eq:psi=0 equation 18}-\eqref{eq:psi=0 equation 21} can be solved on some non-empty, open subset $S\subset\mathbb{R}^3$, where the subset is determined by the preceding analysis. We can therefore solve \eqref{eq:psi=0 equation 13} for $\phi_1,\phi_2$; doing so leaves us with a 6-parameter family of solutions parameterised by
\begin{equation}
    \left(\kappa_1,\kappa_2,s_1\right)\in S\quad,\quad \sigma_-,\sigma_+\in[0,2\pi)\quad,\quad c\in(c_-,c_+)\subset\left(0,\frac{\pi}{4}\right)
\end{equation}
such that
\begin{align*}
    &\alpha_1(s)=\begin{pmatrix}
        q+\frac{\kappa_1}{2}\cot(2C) & \frac{i}{2}\kappa_1\tan(\kappa_1(s+s_1)) \\ -\frac{i}{2}\kappa_1\tan(\kappa_1(s+s_1)) & q+\frac{\kappa_1}{2}\cot(2C)
    \end{pmatrix}\\
    & \beta_1(s)=\frac{1}{2}\kappa_1\sec(\kappa_1(s+s_1))e^{i\chi_1}\begin{pmatrix}
        1 & i \\ i & -1
    \end{pmatrix}\\
    &\alpha_2(s)=\begin{pmatrix}
        q+\frac{1}{2}(\kappa_1\tan(A)-\kappa_2\tan(B))\cos(2c) & \frac{i}{2}\kappa_2\tan(\kappa_2s) \\ -\frac{i}{2}\kappa_2\tan(\kappa_2s) & q+\frac{1}{2}(\kappa_1\tan(A)-\kappa_2\tan(B))\cos(2c)
    \end{pmatrix}\\
    &\beta_2(s)=\frac{1}{2}\kappa_2\sec(\kappa_2s)e^{i\chi_2}\begin{pmatrix}
        1 & i \\ i & -1
    \end{pmatrix}\\
    &\alpha_3(s)=\begin{pmatrix}
        q+\frac{\kappa_1}{2}\cot(2C) & \frac{i}{2}\kappa_1\tan(\kappa_1(s-s_1)) \\ -\frac{i}{2}\kappa_1\tan(\kappa_1(s-s_1)) & q+\frac{\kappa_1}{2}\cot(2C)
    \end{pmatrix}\\
    &\beta_3(s)=\frac{1}{2}\sec(\kappa_1(s-s_1))e^{i\chi_1}\begin{pmatrix}
        1 & i \\ i & -1
    \end{pmatrix}\\
    & B_+=\sqrt{\kappa_1\csc(2C)}e^{i\chi_1}\operatorname{id},\qquad B_-=\sqrt{\frac{1}{2}\kappa_1\tan(C)}\begin{pmatrix}
        1 & i \\ i & -1
    \end{pmatrix}\\
    &I_+=\sqrt{\kappa_1\tan(A)-\kappa_2\tan(B)}\cos(c)e^{i\sigma_-}\begin{pmatrix}
        -i \\ 1
    \end{pmatrix},\qquad I_-=\sqrt{\kappa_1\tan(A)-\kappa_2\tan(B)}\sin(c)e^{i\sigma_+}\begin{pmatrix}
        -i \\ 1
    \end{pmatrix}\\
    &J_+=\sqrt{\kappa_1\tan(A)-\kappa_2\tan(B)}\sin(c)e^{i\sigma_-}\begin{pmatrix}
        1 & i
    \end{pmatrix},\quad J_-=-\sqrt{\kappa_1\tan(A)-\kappa_2\tan(B)}\cos(c)e^{i\sigma_+}\begin{pmatrix}
        1 & i
    \end{pmatrix}
\end{align*}
where $A=\kappa_1(\lambda-s_1)$, $B=\kappa_2\lambda$, $C=\kappa_1(\frac{l}{2}-s_1)$, and
\begin{equation*}
    \chi_1=\phi_1+\sigma_-\sigma_++\frac{\pi}{2}\qquad;\qquad \chi_2=\phi_2+\sigma_-+\sigma_+-\frac{\pi}{2}.
\end{equation*}

\section{U(1) invariant solutions of type (1,2,1)}\label{section: U(1) invariant solutions (1,2,1)}

For a type-$(1,2,1)$ bow, we wish to consider rotations \eqref{eq:Rotation action} and circle fibre translations \eqref{eq:Circle fibre translation action}, which act on the Nahm and bifundamental data as in the type-$(2,2,2)$ case, but leave $X$ invariant. The action
\begin{equation}
    \Phi:\begin{pmatrix}
        T^0 \\ \vec{T} \\ B_\pm \\ X
    \end{pmatrix}\mapsto\begin{pmatrix}
        T^0 \\ \vec{T} \\ B_\pm \\ Xe^{i\phi}
    \end{pmatrix}
\end{equation}
leaves the bow conditions invariant, however, it can always be compensated for by a $U(1)$ gauge transformation that leaves the rest of the bow data invariant, we will therefore not consider this action. Therefore, we wish to find bow data symmetric under the action
\begin{equation}
    \varphi:\begin{pmatrix}
        \vec{T} \\ B_\pm \\ X
    \end{pmatrix}\mapsto\begin{pmatrix}
        R_1(\theta)\vec{T} \\ B_\pm e^{i(\frac{\theta}{2}\pm\phi)} \\ X
    \end{pmatrix}
\end{equation}
with the compensating gauge transformation
\begin{equation}
    g=\begin{cases}
        g_2=\rho_K(\theta)e^{i\omega_2} \qquad&:\qquad s\in\mathcal{I}_2\\
        g_1=e^{i\omega_1}\qquad&:\qquad s\in\mathcal{I}_1,\mathcal{I}_3
    \end{cases}.
\end{equation}
Similar to the $(2,2,2)$ case, we need only to consider $K=0,1$, and $\phi=\frac{\theta}{2}$.
% Similar to the $(2,2,2)$ case, there are only 3 values of $K\in\mathbb{Z}$ that allow for solutions of the bow conditions: $K=0,\pm1$, the cases $K=\pm1$ are gauge equivalent, so we only need to consider one of them. So, we have 2 cases to consider: the `trivial compensating gauge' in which $K=0$, and the `non-trivial compensating gauge' where $K=1$.

% Similar to in the $(2,2,2)$-type bow we will later see that the only values of $\phi$ that allow for valid solutions are $\phi=\pm\frac{\theta}{2}$, with the $\pm$ cases related by the action of \eqref{eq:Reflections}, so will will only consider the positive case.

\subsection{Trivial compensating gauge transformation}\label{subsection: trivial compensating gauge (1,2,1)}

In the case $K=0$, the compensating gauge transformation reduces to
\begin{equation}
    g=\begin{cases}
        g_2=e^{i\omega_2}\operatorname{id}\qquad&:\qquad s\in\mathcal{I}_2\\
        g_1=e^{i\omega_1}\qquad&:\qquad s\in\mathcal{I}_1,\mathcal{I}_3
    \end{cases}
\end{equation}
and the symmetry equation becomes:
\begin{equation}
    \left(\varphi\circ g\right)\cdot\begin{pmatrix}
        \alpha_P \\ \beta_P \\ B_\pm \\ X
    \end{pmatrix}=\begin{pmatrix}
        \alpha_P \\ \beta_Pe^{i\theta} \\ B_\pm e^{i(\frac{\theta}{2}\pm\frac{\theta}{2})} \\ Xe^{i(\omega_1-\omega_2)}
    \end{pmatrix}=\begin{pmatrix}
        \alpha_P \\ \beta_P \\ B_\pm \\ X
    \end{pmatrix}
\end{equation}
for all $\theta\in[0,2\pi)$. We have solutions for $\omega_2=\omega_1$. In all cases $\beta_P=0$ for all $s\in\mathcal{I}_P$ for all intervals, and Nahm's equations simplify to
\begin{equation}
    \frac{d}{ds}\alpha_P(s)=0
\end{equation}
so, $\alpha_P$ is constant on each interval, which can be made Hermitian by a gauge choice. On the outer intervals $\mathcal{I}_{1,3}$ $\alpha_P$ is therefore simply a real number, and on $\mathcal{I}_2$ it is a constant Hermitian $2\times2$ matrix. And we must have that $B_+=0$

The reality condition implies that $\alpha_2$ must be symmetric, and therefore:
\begin{equation}
    \alpha_2=\begin{pmatrix}
        a & b \\ b & d
    \end{pmatrix}\qquad:\qquad a,b,d\in\mathbb{R}
\end{equation}
and therefore the matching conditions become:
\begin{equation}
    X^\dagger\alpha_2 X=\alpha_1\qquad,\qquad 2\alpha_1=2q-|B_-|^2.
\end{equation}
By a gauge choice we can fix $X=\begin{pmatrix}
    1 & 0
\end{pmatrix}^t$. Therefore, the fundamental matching condition implies:
\begin{equation}
    a=\alpha_1=q-\frac{1}{2}|B_-|^2.
\end{equation}
We therefore have a 4-parameter family of solutions parameterised by $a,b,d\in\mathbb{R}$, and $\mu:=\arg(B_-)\in[0,2\pi)$ such that
\begin{align*}
    \alpha_1=\alpha_3=a\qquad,\qquad \beta_1=\beta_3=0\qquad&,\qquad
    \alpha_2=\begin{pmatrix}
        a & b \\ b & d
    \end{pmatrix}\qquad,\qquad \beta_2=0\\
    B_+=0\qquad,\qquad B_-=\sqrt{2(q-a)}&e^{i\mu}\qquad,\qquad
    X=\begin{pmatrix}
        1 \\ 0
    \end{pmatrix}.
\end{align*}

\subsection{Non-trivial compensating gauge transformation}\label{subsection: non-trivial compensating gauge (1,2,1)}

For $K=1$, the compensating gauge transformation is
\begin{equation}
    g=\begin{cases}
        g_2=\rho e^{i\omega_2} \qquad&:\qquad s\in\mathcal{I}_2\\
        g_1=e^{i\omega_1}\qquad&:\qquad s\in\mathcal{I}_1,\mathcal{I}_3
    \end{cases}
\end{equation}
and the symmetry equations become:
\begin{equation}
    \left(\varphi\circ g\right)\cdot\begin{pmatrix}
        \alpha_{1,3} \\ \beta_{1,3} \\ \alpha_2 \\ \beta_2 \\ B_\pm \\ X
    \end{pmatrix}=\begin{pmatrix}
        \alpha_{1,3} \\ \beta_{1,3}e^{i\theta} \\ \rho^{-1}\alpha_2\rho \\ \rho^{-1}\beta_2\rho e^{i\theta} \\ B_\pm e^{i(\frac{\theta}{2}\pm\frac{\theta}{2})} \\ \rho^{-1}X e^{i(\omega_1-\omega_2)}
    \end{pmatrix}=\begin{pmatrix}
        \alpha_{1,3} \\ \beta_{1,3} \\ \alpha_2 \\ \beta_2 \\ B_\pm \\ X
    \end{pmatrix}.
\end{equation}
Which can be solved when $\omega_1-\omega_2=\pm\frac{\theta}{2}$. Therefore, up to a gauge choice the symmetry equations and the reality conditions are solved by
\begin{align}
    \alpha_{1}=\alpha_3\in\mathbb{R}\qquad&,\qquad \beta_1=\beta_3=0\\ \alpha_2(s)=u_2(s)\operatorname{id}+v_2(s)\sigma^2\qquad&,\qquad \beta_2(s)=w_2(s)(\sigma_3+i\sigma_1)\\
\end{align}
and
\begin{equation}
    B_+=0\qquad;\qquad X=\frac{1}{\sqrt{2}}\begin{pmatrix}
        \pm i \\ 1
    \end{pmatrix}\qquad:\qquad \omega_1-\omega_2=\mp\frac{\theta}{2}.
\end{equation}
Nahm's equations imply that $\alpha_1,\alpha_3$ are constants, and that $u_2$ is a constant and $v_2,w_2$ take the forms
\begin{equation}
    v_2(s)=-\frac{1}{2}\kappa_2\tan(\kappa_2s)\qquad;\qquad w_2(s)=\frac{1}{2}\kappa_2\sec(\kappa_2s)e^{i\chi_2}.
\end{equation}
The matching conditions become:
\begin{align}
    &X^\dagger\alpha_2(-\lambda)X=\alpha_1\quad\Rightarrow\quad u_2\pm\frac{1}{2}\kappa_2\tan(\kappa_2\lambda)=\alpha_1\\
    &\alpha_1=q-\frac{1}{2}|B_-|^2.
\end{align}
So, $u_2=-\frac{1}{2}\left(|B_-|^2\pm\kappa_2\tan(\kappa_2\lambda)\right)$, and therefore we have a 4-parameter family of solutions parameterised by $\kappa_2\in(0,\frac{\pi}{4\lambda})$, $\chi_2\in[0,2\pi)$, and $B_-\in\mathbb{C}\setminus\lbrace0\rbrace$. Then:
\begin{align*}
    &\alpha_1=\alpha_3=q-\frac{1}{2}|B_-|^2\qquad,\qquad\beta_1=\beta_3=0\\
    &\alpha_2(s)=q-\frac{1}{2}\left(|B_-|^2\pm\kappa_2\tan(\kappa_2\lambda)\right)\operatorname{id}-\frac{1}{2}\kappa_2\tan(\kappa_2s)\sigma^2\\
    &\beta_2(s)=\frac{1}{2}\kappa_2\sec(\kappa_2s)e^{i\chi_2}(\sigma^3+i\sigma^1)\\
    &B_+=0\qquad,\qquad B_-=B_-\\
    &X=\frac{1}{\sqrt{2}}\begin{pmatrix}
        \pm i\\1
    \end{pmatrix}.
\end{align*}

\section{U(1) invariant solutions of type (2,1,2)}\label{section: U(1) invariant solutions of type (2,1,2)}

For a type-$(2,1,2)$ bow, we again consider rotations \eqref{eq:Rotation action}, and circle fibre translations \eqref{eq:Circle fibre translation action}, but similar to the $(1,2,1)$ case there is no phase action acting only on the fundamental data. Therefore we wish to find bow data symmetric under
\begin{equation}
    \varphi:\begin{pmatrix}
        \alpha_P \\ \beta_P \\ B_\pm \\ X
    \end{pmatrix}\mapsto\begin{pmatrix}
        \alpha_P \\ \beta_P e^{i\theta} \\ B_\pm e^{i(\frac{\theta}{2}\pm\phi)} \\ X
    \end{pmatrix}
\end{equation}
with the compensating gauge transformation
\begin{equation}
    g=\begin{cases}
        g_2=e^{i\omega_2}\qquad&:\qquad s\in\mathcal{I}_2\\
        g_1=\rho_Ke^{i\omega_1}\qquad&:\qquad s\in\mathcal{I}_1,\mathcal{I}_3
    \end{cases}.
\end{equation}
As before we need to consider $K=0,1$, and only need to consider $\phi=\frac{\theta}{2}$.

\subsection{Trivial compensating gauge transformation}\label{subsection: (2,1,2) trivial}

When $K=0$, the compensating gauge transformation reduces to
\begin{equation}
    g=\begin{cases}
        g_2=e^{i\omega_2}\qquad&:\qquad s\in\mathcal{I}_2\\
        g_1=e^{i\omega_1}\operatorname{id}\qquad&:\qquad s\in\mathcal{I}_1,\mathcal{I}_3
    \end{cases}
\end{equation}
and the symmetry equations become:
\begin{equation}
    (\varphi\circ g)\cdot\begin{pmatrix}
        \alpha_P \\ \beta_P \\ B_\pm \\ X
    \end{pmatrix}\mapsto\begin{pmatrix}
        \alpha_P \\ \beta_P e^{i\theta} \\ B_\pm e^{i(\frac{\theta}{2}\pm\frac{\theta}{2})} \\ X e^{i(\omega_1-\omega_2)}
    \end{pmatrix}=\begin{pmatrix}
        \alpha_P \\ \beta_P \\ B_\pm \\ X
    \end{pmatrix}
\end{equation}
for all $\theta\in[0,2\pi)$. We have solutions for $\omega_2=\omega_1$. We see that $\beta_P=0$ for all $s\in\mathcal{I}_P$ for all intervals, $B_+=0$, and Nahm's equations simplify to
\begin{equation}
    \frac{d}{ds}\alpha_P(s)=0
\end{equation}
so, by a choice of gauge $\alpha_P$ are constant and hermitian. By a gauge choice
\begin{equation}
    B_-=\begin{pmatrix}
        Z & w \\ w & -Z^*
    \end{pmatrix}e^{i\mu}
\end{equation}
for $Z\in\mathbb{C}$, $w\in\mathbb{R}$, $\mu\in[0,2\pi)$, and $X=\begin{pmatrix}
    1 & 0
\end{pmatrix}^t$. The matching conditions become:
\begin{align}
    2\alpha_1=2q-B_-^\dagger B_-\quad&\Rightarrow\quad \alpha_1=\frac{1}{2}\left(2q-|Z|^2-w^2\right)\operatorname{id}_2\\
    X^\dagger \alpha_1X=\alpha_1\quad&\Rightarrow\quad \alpha_2=\frac{1}{2}\left(2q-|Z|^2-w^2\right).
\end{align}
We therefore have a 4-parameter family of solutions parameterised by $Z\in\mathbb{C}$, $w\in\mathbb{R}$, $\mu\in[0,2\pi)$ such that
\begin{align}
    \alpha_1=\alpha_3=\frac{1}{2}\left(2q-|Z|^2-w^2\right)\operatorname{id}\quad&,\quad \beta_1=\beta_3=0\quad,\quad \alpha_2=\frac{1}{2}\left(2q-|Z|^2-w^2\right)\quad,\quad \beta_2=0\\
    B_+=0\quad&,\quad B_-=\begin{pmatrix}
        Z & w \\ w & -Z^*
    \end{pmatrix}e^{i\mu}\quad,\quad X=\begin{pmatrix}
        1 \\ 0
    \end{pmatrix}
\end{align}

\subsection{Non-trivial compensating gauge transformation}\label{subsection: (2,1,2) non-trivial}

For $K=1$, the compensating gauge transformation is
\begin{equation}
    g=\begin{cases}
        g_2=e^{i\omega_2}\qquad&:\qquad s\in\mathcal{I}_2\\
        g_1=\rho e^{i\omega_1}\qquad&:\qquad s\in\mathcal{I}_1,\mathcal{I}_3
    \end{cases}
\end{equation}
and the symmetry equations become:
\begin{equation}
    (\varphi\circ g)\cdot\begin{pmatrix}
        \alpha_{1,3} \\ \beta_{1,3} \\ \alpha_2 \\ \beta_2 \\ B_\pm \\ X
    \end{pmatrix}=\begin{pmatrix}
        \rho^{-1}\alpha_{1,3}\rho \\ \rho^{-1}\beta_{1,3}\rho e^{i\theta} \\ \alpha_2 \\ \beta_2 e^{i\theta} \\ \rho^{-1}B_\pm\rho e^{i(\frac{\theta}{2}\pm\frac{\theta}{2})} \\ \rho^{-1} X e^{i(\omega_1-\omega_2)}
    \end{pmatrix}=\begin{pmatrix}
        \alpha_{1,3} \\ \beta_{1,3} \\ \alpha_2 \\ \beta_2 \\ B_\pm \\ X
    \end{pmatrix}.
\end{equation}
Which can be solved when $\omega_1-\omega_2=\pm\frac{\theta}{2}$. Therefore, up to a gauge choice the symmetry equations and the reality condition are solved by
\begin{align}
    \alpha_1(s)=u_1(s)\operatorname{id}+v_1(s)\sigma^2\qquad&,\qquad \beta_1(s)=w_1(s)(\sigma^3+i\sigma^1) \\ \alpha_3(s)=u_1(-s)\operatorname{id}-v_1(-s)\sigma^2\qquad&,\qquad \beta_3(s)=w_1(-s)(\sigma^3+i\sigma^1) 
\end{align}
with $\alpha_2\in\mathbb{R}$, $\beta_2=0$, and
\begin{equation}
    B_+=b_1(\sigma^3+i\sigma^1) \quad,\quad B_-=b_0\operatorname{id}\quad,\quad X=\frac{1}{\sqrt{2}}\begin{pmatrix}
        \pm i \\ 1
    \end{pmatrix}
\end{equation}
for $u_1,v_1:\mathcal{I}_1\rightarrow\mathbb{R}$, $w_1:\mathcal{I}_1\rightarrow\mathbb{C}$ continuous, and $b_0,b_1\in\mathbb{C}$. Then Nahm's equations imply that $u_1$ is constant, and
\begin{equation}
    v_1(s)=-\frac{1}{2}\kappa_1\tan(\kappa_1(s+s_1))\qquad,\qquad w_1(s)=\frac{1}{2}\sec(\kappa_1(s+s_1))e^{i\chi_1}.
\end{equation}
The matching conditions become:
\begin{align}
    &2u_1=|b_0|^2-2|b_1|^2+2q\\
    &\kappa_1\tan(\kappa_1(\frac{l}{2}-s_1))=2|b_1|^2\\
    &\kappa_1\sec(\kappa_1(\frac{l}{2}-s_1))e^{i\chi_1}=2b_0b_1\\
    &\alpha_2=2\left(u_1\pm\kappa_1\tan(\kappa_1(s_1-\lambda))\right).
\end{align}
Let $b_0=\sqrt{2}R\sin(a)e^{i\mu_0}$, $b_1=R\cos(a)e^{i\mu_1}$. Then $\chi_1=\mu_0+\mu_1$, and
\begin{equation}
    2u_q=-R^2\cos(2a)
\end{equation}
and
\begin{equation}
    R^2\cos^2(a)=\frac{1}{2}\kappa_1\tan(\kappa_1(\frac{l}{2}-s_1))\qquad,\qquad \cot(a)=\sqrt{2}\sin(\kappa_1(\frac{l}{2}-s_1))
\end{equation}
can be solved for $R,a$ if $s_1<\frac{l}{2}$. This gives a 4-parameter family of solutions parameterised by $\kappa_1\in\left(0,\frac{\pi}{l-2\lambda}\right)$, $s_1\in\left(\frac{l}{2}-\frac{\pi}{2\kappa_1},\frac{l}{2}\right)$, $\mu_0,\mu_1\in[0,2\pi)$.

% \subsection{Classification of solutions}

% \begin{theorem}
%     All $U(1)$ symmetric Taub-NUT instantons of type $(1,2,1)$ are classified into 6 families by their precise symmetry, determined by $\phi,\Delta\omega:=\omega_1-\omega_2$:
%     \begin{itemize}
%         \item 4-parameter families with $\phi=\pm\frac{\theta}{2}$, $\Delta\omega=0$, $K=0$
%         \item 4-parameter families with $\psi=\pm\frac{\theta}{2}$, $\Delta\omega=\pm\frac{\theta}{2}$, $K=1$
%     \end{itemize}
% \end{theorem}

\section{Classification of solutions}\label{section: classification of solutions}

\begin{theorem}\label{Theorem: classification}
    All $U(1)$-symmetric Taub-NUT instantons with $(k,m)=(2,0),(1,1),(2,-1)$ can be classified into 18 families by their precise symmetry, determined by $\phi$ and $\psi$. There are 10 families of $(2,2,2)$ instantons:
    \begin{itemize}
        \item 4 $8$-parameter families with $\phi=\pm\frac{\theta}{2}$, $\psi=\pm\frac{\theta}{2}$ and trivial compensating gauge
        \item 4  $4$-parameter families with $\phi=\pm\frac{\theta}{2}$, $\psi=\pm\theta$ and non-trivial compensating gauge
        \item 2  $6$-parameter families with $\phi=\pm\frac{\theta}{2}$, $\psi=0$ and non-trivial compensating gauge
    \end{itemize}
    4 families of $(1,2,1)$ instantons:
    \begin{itemize}
        \item 2 $4$-parameter families with $\phi=\pm\frac{\theta}{2}$ and trivial compensating gauge
        \item 2 $4$-parameter families with $\phi=\pm\frac{\theta}{2}$ and non-trivial compensating gauge
    \end{itemize}
    and 4 families of $(2,1,2)$ instantons:
    \begin{itemize}
        \item 2 $4$-parameter families with $\phi=\pm\frac{\theta}{2}$, and trivial compensating gauge
        \item 2 $4$-parameter families with $\phi=\pm\frac{\theta}{2}$, and non-trivial compensating gauge.
    \end{itemize}
\end{theorem}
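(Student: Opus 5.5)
The proof is a case analysis. Its starting point is that a $U(1)$-symmetric set of bow data is fixed, up to the compensating gauge transformation \eqref{eq:Compensating gauge transformation}, by $\varphi=\Theta\circ\mathcal{A}\circ\mathcal{B}$ for every $\theta$. Since $\theta\mapsto g(\theta)$ must be a homomorphism $U(1)\to U(\mathcal{R}_P)$ on each subinterval, it is conjugate to a diagonal representation $\operatorname{diag}(e^{in_1\theta/2},e^{in_2\theta/2})$. After fixing the overall phase $\omega$, I will rewrite this in the real form $\rho_K e^{i\omega}$ with $K=n_1-n_2$, using the reality condition of Theorem \ref{The reality condition} to justify the real form.

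First I will bound $K$. The symmetry requires $\rho_K^{-1}\beta_P\rho_K e^{i\theta}=\beta_P$. Decompose $\beta_P$ into eigenvectors of $\operatorname{ad}(\sigma^2)$, which have weights $0$ and $\pm K$. A nonzero component then needs $1\pm K=0$ or $1=0$, so $\beta_P\neq0$ forces $K=\pm1$. If $|K|\ge2$, then $\beta_P\equiv0$ and $\alpha_P$ commutes with $\sigma^2$. The same weight argument applied to $B_\pm$ (weights $\frac{\theta}{2}\pm\phi$) and to $I_\pm,J_\pm$ (weights $\frac{\theta}{2}\pm\psi-\omega$ shifted by $\pm K/2$) shows that the matching conditions \eqref{eq:Bifundamental matching condition -}--\eqref{eq:Fundamental matching condition +} then reduce to the abelian-type equations of Section \ref{subsection: Trivial compensating gauge}. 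Those solutions are only compatible with $K=0$: the nondegeneracy $(I_\pm,J_\pm)\neq0$ and $(B_+,B_-)\neq0$ fixes the weights, and a nonzero $K$ would force a component to vanish. The case $K=-1$ is conjugated to $K=1$ by the constant gauge transformation $\sigma^3$, which maps $\rho_{-1}\mapsto\rho_1$.

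Next I will determine $\phi$ and $\psi$. Nontriviality of $B_\pm$, together with the weight of $B_+B_-$ appearing in \eqref{eq:Bifundamental matching condition -}, forces $\frac{\theta}{2}\pm\phi\in\{0,\theta\}$ modulo the $\rho_K$ weights. Hence $\phi=\pm\frac{\theta}{2}$, and the reflection $F$ of \eqref{eq:Reflections} exchanges the two signs. This is the factor of $2$ in every count. Likewise, the fundamental weights give $\psi=\pm\frac{\theta}{2}$ when $K=0$, and $\psi\in\{0,\pm\theta\}$ when $K=1$. For the $(1,2,1)$ and $(2,1,2)$ bows there is no $\psi$. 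Instead the relative phase $\omega_1-\omega_2$ is forced to $0$ or $\pm\frac{\theta}{2}$, and the $\pm$ cases there are the gauge-equivalent choices of $X$. Within each admissible case I invoke the explicit solutions already derived in Sections \ref{section: U(1) invariant (2,2,2)}--\ref{section: U(1) invariant solutions of type (2,1,2)}. These show that the listed families are complete for that symmetry type and nonempty; for $\psi=0$ nonemptiness is the existence of the open set $S$.

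Finally I count parameters. For trivial gauge in type $(2,2,2)$, the count is the real dimension of $\{(B,I,J)\}$ solving \eqref{eq:ADHM equation for trivial compensating gauge bow data} after gauge fixing, namely $\gamma_1,\gamma_2,\chi,\mu,Z,W$. For $\psi=\pm\theta$ the parameters are $(\mu_0,\nu_-,\kappa_1,s_1)$. For $\psi=0$ they are $(\kappa_1,\kappa_2,s_1,\sigma_\pm,c)$. The remaining cases read off directly. Multiplying by the sign choices in $\phi$ and $\psi$ gives $4+4+2$, $2+2$ and $2+2$ families.

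I expect the main obstacle to be the exclusion of $|K|\ge2$ and of the other values of $\phi$ and $\psi$. This needs a careful weight bookkeeping across all matching conditions simultaneously, including checking that the alternative reality-condition solution $B_-=0$ mentioned in Section \ref{subsection: Non-trivial compensating gauge and psi non-zero} yields no solutions.
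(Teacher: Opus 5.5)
Your outline follows the paper's proof step for step: weights of $\operatorname{Ad}\rho_K$ on $\beta_P$ to restrict $K$, conjugation by $\sigma^3$ for $K=\pm1$, weights of $B_\pm,I_\pm,J_\pm$ to fix $\phi,\psi$, the reflection $F$, and then the explicit sections. However, two steps do not go through as written.

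\textbf{The disposal of $|K|\ge2$ is false.} Your weight argument correctly gives $\beta_P\equiv0$ there, but nondegenerate solutions survive. Take $V=(-i,1)^t$ and $b\neq0$, and consider the data
\[
\beta_P\equiv0,\quad B_+=0,\quad B_-=b(\sigma^3+i\sigma^1),\quad I_-=\sqrt{2}|b|V,\quad J_+=\sqrt{2}|b|V^t,\quad I_+=J_-=0,
\]
\[
2\alpha_1=2q-2|b|^2(\operatorname{id}-\sigma^2),\qquad 2\alpha_2=(2q-4|b|^2)\operatorname{id},\qquad 2\alpha_3=2q-2|b|^2(\operatorname{id}+\sigma^2).
\]
These data satisfy Nahm's equations, all four matching conditions, $T^j(-s)=T^j(s)^t$ (note $\alpha_3=\alpha_1^t$), and every nondegeneracy condition. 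Moreover $\rho_2^{-1}B_-\rho_2=e^{-2i\theta}B_-$, $\rho_2^{-1}I_-=e^{-i\theta}I_-$ and $J_+\rho_2=e^{-i\theta}J_+$, while the $\alpha_P$ commute with $\sigma^2$. Hence the data are fixed by $\varphi\circ g$ with $K=2$, $\omega=0$, $\phi=-\frac{3\theta}{2}$, $\psi=-\frac{\theta}{2}$. The paper's appeal to Harland also only controls $\beta_P$, so it shares this blind spot.

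What is true is absorption, not exclusion. For $|K|\ge2$ the phases $\frac{\theta}{2}\pm\phi$ add up to $\theta$, yet each must cancel a weight in $\{0,\pm K\theta\}$. Likewise the phases of $I_\pm$ and $J_\pm^t$ add up to $\theta$, yet each must lie in $\{\pm\frac{K\theta}{2}\}$. Hence exactly one of $B_\pm$ is nonzero, and (using nondegeneracy) exactly one of $I_\pm,J_\pm$ is nonzero at each fundamental point. Such data are also fixed by a $K=0$ action with $\omega=0$ and $\phi,\psi\in\{\pm\frac{\theta}{2}\}$; the example has $\phi=\psi=\frac{\theta}{2}$. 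The pairings $(I_-,I_+)$ and $(J_-,J_+)$ are excluded by nondegeneracy, or by $\operatorname{tr}\,2(\alpha_1-\alpha_3)=\operatorname{tr}\left([B_+,B_+^\dagger]+[B_-,B_-^\dagger]\right)=0$. So these solutions add no family, but $(\phi,\psi)$ is not unique for them, and your ``precise symmetry'' labelling must say which one is meant.

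\textbf{The weight of $B_+B_-$ cannot determine $\phi$.} The phases $\frac{\theta}{2}\pm\phi$ always sum to $\theta$, which is exactly the weight of $\beta_1$. Additivity forces $\phi=\pm\frac{\theta}{2}$ only when both $B_\pm\neq0$. If one vanishes and $K=1$, the symmetry equations also allow $\phi=\pm\frac{3\theta}{2}$, with the surviving $B$ in the $\operatorname{ad}(\sigma^2)$-eigenspace opposite to that of $\beta_P$. You only flag this case as an expected obstacle. The paper treats it via $\beta_1(-\frac{l}{2})=B_+B_-=0$, which kills $\beta_1$ and $\beta_3$ because $w_1'=-2v_1w_1$.

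To close it, note that the bifundamental matching then gives $v_1=-|b|^2<0$. Now split on $\beta_2$:
\begin{itemize}
\item If $\beta_2\not\equiv0$, then $I_-J_-=-\beta_2(-\lambda)\neq0$. This forces $I_-$ and $J_-^t$ into the $e^{i\theta/2}$-eigenline of $\rho$. The $\sigma^2$-part of the matching at $-\lambda$ then gives $v_1>v_2(-\lambda)>0$, since $v_2$ is odd and decreasing. This contradicts $v_1<0$.
\item If $\beta_2\equiv0$, you are back to $\beta\equiv0$ data, which is absorbed into the $K=0$ class exactly as above.
\end{itemize}
So both steps need the same lemma: every symmetric solution with $\beta_P\equiv0$ lies in the $K=0$ class with $\phi,\psi\in\{\pm\frac{\theta}{2}\}$.
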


\begin{proof}
    By a similar argument as in \cite{Harland2007LargeCalorons} it is possible to show that the only values of $K$ in the compensating gauge transformation \eqref{eq:Compensating gauge transformation} that allow for valid solutions of the symmetry equations are $K=0,\pm1$. This is done via representation theory by showing that these have non-zero dimension of the trivial subrepresentations of the $U(1)$ action on the Nahm data. The cases $K=\pm1$ are gauge equivalent, and therefore we only need to consider $K=0,1$. This argument works for all three bow types considered.

    We now consider the bifundamental symmetry equations:
    \begin{equation}
        (\varphi\circ g)\cdot\begin{pmatrix}
            B_- \\ B_+
        \end{pmatrix}=\begin{pmatrix}
            g^{-1} B_- g e^{i(\frac{\theta}{2}-\phi)} \\ g^{-1} B_+g e^{i(\frac{\theta}{2}+\phi)}
        \end{pmatrix}=\begin{pmatrix}
            B_- \\ B_+
        \end{pmatrix}
    \end{equation}
    We note that, for $(1,2,1)$ bows, since the bifundamental data are rank-1 the compensating gauge acts trivially on the data, similarly for $(2,2,2)$ and $(2,1,2)$ bows with $K=0$, the compensating gauge transformation commutes with the bifundamental data and so acts trivially. In these cases therefore we have the symmetry equations
    \begin{equation}
        \begin{pmatrix}
            B_- e^{i(\frac{\theta}{2}-\phi)}\\ B_+ e^{i(\frac{\theta}{2}+\phi)}
        \end{pmatrix}=\begin{pmatrix}
            B_- \\ B_+
        \end{pmatrix}.
    \end{equation}
    Since, to have valid bow data, $(B_-,B_+)\neq 0$, we can see that we must have $\phi=\pm\frac{\theta}{2}$. Now, in the $(2,2,2)$ and $(2,1,2)$ bows with $K=1$, we have
    \begin{equation}
        \begin{pmatrix}
            \rho^{-1}B_-\rho e^{i(\frac{\theta}{2}-\phi)} \\ \rho^{-1}B_+ \rho e^{i(\frac{\theta}{2}+\phi)}
        \end{pmatrix}=\begin{pmatrix}
            B_- \\ B_+
        \end{pmatrix}
    \end{equation}
    the equation
    \begin{equation}
        \rho^{-1}(\theta) A\rho(\theta)=Ae^{i\chi(\theta)}\qquad:\qquad \rho(\theta)=\begin{pmatrix}
            \cos\frac{\theta}{2} & \sin\frac{\theta}{2} \\ -\sin\frac{\theta}{2} & \cos\frac{\theta}{2}
        \end{pmatrix},
    \end{equation}
    for non-zero $A\in\mathbb{C}^{2\times 2}$, is only solvable when $\chi(\theta)=0,\pm\theta$, therefore, for the symmetry equations on both $B_\pm$ to be solvable we must have that $\phi=\pm\frac{\theta}{2}$.
    
    Note that, since we only require one of the $B_\pm$ to be non-zero, we could have $\phi=\pm\theta$ with $B_\pm=0$ and $B_\mp\neq0$. This, however, produces no valid solutions to the bow conditions: if either of the $B_\pm$ vanish, then the bifundamental matching condition \eqref{eq:Bifundamental matching condition -}, along with the symmetric form for $\beta_1$ \eqref{eq:alpha_1,beta_1} implies:
    \begin{equation}
        \beta_1(-\frac{l}{2})=\frac{1}{2}\kappa_1\sec(\kappa_1(\frac{l}{2}-s_1))e^{i\chi_1}(\sigma^3+i\sigma^1)=B_+B_-=0
    \end{equation}
    this would force $\kappa_1=0$, which in turn forces all the Nahm data to vanish except $\alpha_1=u_1\operatorname{id}_{\mathcal{R}_1}$ and $\alpha_2=u_2\operatorname{id}_{\mathcal{R}_2}$ constant. In this case the fundamental data must vanish, and therefore there are no valid solutions.

    We now consider the symmetry equations on the fundamental data. Since there is no phase action on the fundamental data (as shown in section \ref{section: U(1) invariant solutions (1,2,1)}) we only need to consider $(2,2,2)$ type bow data. The symmetry equations are
    % We now consider the symmetry equations on the fundamental data. Since $\varphi$ doesn't act on the fundamental data in the $(1,2,1)$ and $(2,1,2)$ cases, by the argument in \ref{section: U(1) invariant solutions (1,2,1)}, we only need to consider $(2,2,2)$ type bow data. The symmetry equations are
    \begin{equation}
        (\varphi\circ g)\cdot \begin{pmatrix}
            g^{-1} I_\pm e^{i(\frac{\theta}{2}\pm\psi)} \\ J_\pm g e^{i(\frac{\theta}{2}\mp\psi)}
        \end{pmatrix}=\begin{pmatrix}
            I_\pm \\ J_\pm
        \end{pmatrix}\quad\Rightarrow\quad \begin{pmatrix}
            \rho_K I_\pm \\ \rho_K J_\pm^\dagger
        \end{pmatrix}=\begin{pmatrix}
            I_\pm e^{i(\frac{\theta}{2}\pm\psi)} \\ J_\pm^\dagger e^{-i(\frac{\theta}{2}\pm\psi)}
        \end{pmatrix}
    \end{equation}
    For $K=0$, we have
    \begin{equation}
        \begin{pmatrix}
            I_\pm \\ J_\pm^\dagger
        \end{pmatrix}=\begin{pmatrix}
            I_\pm e^{i(\frac{\theta}{2}\pm\psi)}\\
            J_\pm^\dagger e^{-i(\frac{\theta}{2}\pm\psi)}
        \end{pmatrix}
    \end{equation}
    Since $(I_\pm,J_\pm^\dagger),(I_-,I_+),(J_-^\dagger,J_+^\dagger)\neq0$, the only values of $\psi$ that allow for solutions are $\psi=\pm\frac{\theta}{2}$. Consider $K=1$, the eigenvalues of $\rho$ are $e^{i\frac{\theta}{2}}, e^{-i\frac{\theta}{2}}$, with eigenvectors $V_1=\begin{pmatrix}
        -i \\ 1
    \end{pmatrix}$, and $V_2=\begin{pmatrix}
        i \\ 1
    \end{pmatrix}$ respectively. Therefore, there are only valid solutions for $\psi=0,\pm\theta$.

    Considering a set of bow data $\mathcal{BD}$ symmetric under an action $\varphi(\theta;\phi;\psi)$, and considering the action \eqref{eq:Reflections}, the bow data $\tilde{\mathcal{BD}}=F\cdot\mathcal{BD}$ generated by the action of $F$ is symmetric under $\varphi(\theta;-\phi;-\psi)$. Therefore, for any bow data symmetric under an action $\varphi\left(\theta;\phi=\frac{\theta}{2};\psi\right)$, we can construct bow data symmetric under the action $\varphi\left(\theta;\phi=-\frac{\theta}{2};-\psi\right)$ by an application of $F$. We therefore only need to consider $\phi=\frac{\theta}{2}$, and all other solutions can be constructed from this.

    We therefore need to consider the cases:
    \begin{itemize}
        \item $(2,2,2)$; $K=0,\psi=\pm\frac{\theta}{2}$: \ref{subsection: Trivial compensating gauge}
        \item $(2,2,2)$; $K=1,\psi=\pm\theta$: \ref{subsection: Non-trivial compensating gauge and psi non-zero}
        \item $(2,2,2)$; $K=1,\psi=0$: \ref{subsection: non-trivial compensating gauge and psi=0}
        \item $(1,2,1)$; $K=0$: \ref{subsection: trivial compensating gauge (1,2,1)}
        \item $(1,2,1)$; $K=1$: \ref{subsection: non-trivial compensating gauge (1,2,1)}
        \item $(2,1,2)$; $K=0$: \ref{subsection: (2,1,2) trivial}
        \item $(2,1,2)$; $K=1$: \ref{subsection: (2,1,2) non-trivial}
    \end{itemize}
    the full parameter spaces of which have been constructed in the relevant sections.
\end{proof}

\section{Limits of the bow data}\label{section: limits of bow data}

The Taub-NUT metric has 2 limits we wish to consider. Recall the metric \eqref{eq:Taub-NUT metric}:
\begin{equation*}
    \eta_\Sigma=\frac{1}{4}\left(V(\vec{t})d\vec{t}^2+\frac{1}{V(\vec{t})}(d\tau+\omega)^2\right)\qquad:\qquad V(\vec{t})=l+\frac{1}{|\vec{t}-\vec{q}|}
\end{equation*}
for $l\in\mathbb{R}_+, \vec{q}\in\mathbb{R}^3$. Considering the limit $l\rightarrow0$, we have the Euclidean metric on $\mathbb{R}^4$. We can see this by writing the flat metric in terms of quaternions \cite{Gibbons1997HyperKahlerSpaces,OHara2021Yang-MillsImpurities}: a quaternion $p\in\mathbb{H}$ can be written as the product of a pure quaternion $a$ and a unit complex number $e^{i\frac{\tau}{2}}$, where $\tau\sim\tau+4\pi$. Defining the vector $\vec{t}:=-aia$, and the 1-form $\omega:=\frac{1}{|\vec{t}|}(da\cdot ia+ai\cdot da)$, the flat metric is
\begin{equation}
    ds^2=dpd\bar{p}=\left(\frac{1}{|\vec{t}|}d\vec{t}^2+|\vec{t}|(d\tau+\omega)^2\right)
\end{equation}
Which is, up to a rescaling, the Taub-NUT metric with $\vec{q}=0$, in the limit $l\rightarrow0$. While, considering the limit $|\vec{q}|\rightarrow\infty$, we have
\begin{equation}\label{eq:R^3xS^1 flat limit of the metric}
    \lim_{\vec{q}\rightarrow\infty}\eta_\Sigma=\frac{1}{4}\left(ld\vec{t}^2+\frac{1}{l}(d\tau+\omega)^2\right)
\end{equation}
which is the flat metric on $\mathbb{R}^3\times S^1$.

% \begin{equation}\label{eq:R^4 flat limit of the metric}
%     \lim_{l\rightarrow0}\eta_\Sigma=\frac{1}{4}\left(\frac{1}{|\vec{t}-\vec{q}|}d\vec{t}^2+|\vec{t}-\vec{q}|(d\tau+\omega)^2\right)
% \end{equation}

It is natural, therefore, to ask the question: does a Taub-NUT instanton recovers an $\mathbb{R}^4$-instanton or a caloron (an anti-self-dual connection on $\mathbb{R}^3\times S^1$) respectively in these limits? From a bow data perspective, we wish to demonstrate that a set of bow data $\mathcal{BD}$ limits to the ADHM data of an instanton when $l\rightarrow0$, and to the Nahm data of a caloron when $|\vec{q}|\rightarrow\infty$.

We will only consider bows of constant rank, and therefore vanishing magnetic charge.

\subsection{The instanton limit}\label{subsection: the instanton limit}

Consider a set of bow data $\mathcal{BD}$ satisfying Nahm's equations, and the bifundamental and fundamental matching conditions. For each of the 3 intervals, $\mathcal{I}_P$, let $l_P$ denote the length of the interval and $\nu_P$ denote its midpoint. We consider the total length of the bow $l_1+l_2+l_3=l\in(0,\varepsilon)$, for some $\varepsilon>0$, and define a normalised interval length $l_P^\prime\in[0,1]$ such that $l_1^\prime+l_2^\prime+l_3^\prime=1$ and $l_p=l\cdot l_P^\prime$. Now consider the Taylor expansion of the Nahm data around the midpoint of each interval
\begin{align}
    &\alpha_P(s)=\alpha_P(\nu_P)+\alpha_P^\prime(\nu_P)(s-\nu_P)+\mathcal{O}(s-\nu_P)^2\label{eq:Taylor expansion of alpha}\\
    &\beta_P(s)=\beta_P(\nu_P)+\beta_P^\prime(\nu_P)(s-\nu_P)+\mathcal{O}(s-\nu_P)^2\label{eQ:Taylor expansion of beta}
\end{align}
the matching conditions therefore become:
\begin{align}
    &A_1(\nu_1)-\frac{1}{2}A_1^\prime(\nu_1)l_1+\mathcal{O}(l_1^2)=(B_+-B_-^\dagger\xi)(B_-+B_+^\dagger\xi)+q \label{eq:Taylor expansion of A1 bifundamental condition}\\
    &A_1(\nu_1)-A_2(\nu_2)+\frac{1}{2}\left(A_1^\prime(\nu_1)l_1+A_2(\nu_2)l_2\right)+\mathcal{O}(l_1^2)+\mathcal{O}(l_2^2)= (I_--J_-^\dagger\xi)(J_-+I_-^\dagger\xi)\label{eq:Taylor expansion of A1A2 fundamental condition}\\
    &A_2(\nu_2)-A_3(\nu_3)+\frac{1}{2}\left(A_2^\prime(\nu_2)l_2+A_3^\prime(\nu_3)l_3\right)+\mathcal{O}(l_2^2)+\mathcal{O}(l_3^2)= (I_+-J_+^\dagger\xi)(J_++I_+^\dagger\xi)\label{eq:Taylor expansion of A2A3 fundamental condition}\\
    &A_3(\nu_3)+\frac{1}{2}A_3^\prime(\nu_3)l_3+\mathcal{O}(l_3^2)=(B_-+B_+^\dagger\xi)(B_+-B_-^\dagger\xi)+q\label{eq:Taylor expansion of A3 bifundamental condition}
\end{align}
Therefore:
\begin{align*}
    &l\cdot\left(A_1(\nu_1)l_1^\prime+A_2(\nu_2)l_2^\prime+A_3(\nu_3)l_3^\prime+\mathcal{O}(l)\right)\\
    &\quad=(B_-+B_+^\dagger\xi)(B_+-B_-^\dagger\xi)-(B_+-B_-^\dagger\xi)(B_-+B_+^\dagger\xi)\\
    &\qquad+(I_--J_-^\dagger\xi)(J_-+I_-^\dagger\xi)+(I_+-J_+^\dagger\xi)(J_++I_+^\dagger\xi)
\end{align*}
So, defining:
\begin{equation}\label{eq:Defining the fundamental ADHM data}
    I:=\begin{pmatrix}
        I_- & I_+
    \end{pmatrix}\qquad;\qquad J^\dagger:=\begin{pmatrix}
        J_-^\dagger & J_+^\dagger
    \end{pmatrix}
\end{equation}
we have:
\begin{align}
    &[B_-,B_+]+IJ=\mathcal{O}(l)\label{eq:Complex ADHM equation for bow data}\\
    &[B_-^\dagger,B_-]+[B_+^\dagger,B_+]+II^\dagger-J^\dagger J=\mathcal{O}(l)\label{eq:Real ADHM equation for bow data}
\end{align}
Now, define the operator:
\begin{equation}\label{eq:Definition of the ADHM operator for bow data}
    \Delta:=\lim_{l\rightarrow0}\begin{pmatrix}
        I^\dagger & J \\ iB_-^\dagger & -iB_+ \\ iB_+^\dagger & iB_-
    \end{pmatrix}
\end{equation}
then, the ADHM equation \cite{Atiyah1978ConstructionInstantons} $\Im_{\mathbb{H}}\left(\Delta^\dagger\Delta\right)=0$ is satisfied if and only if
\begin{align}
    &\lim_{l\rightarrow0}\left(II^\dagger-J^\dagger J+[B_+^\dagger,B_+]+[B_-^\dagger,B_-]\right)=0\label{eq:complex ADHM}\\
    &\lim_{l\rightarrow0}\left(IJ+[B_-,B_+]\right)=0\label{eq:real ADHM}
\end{align}
which is true if and only if the bow conditions are satisfied. Therefore, if we have a set of bow data, solving the bow conditions, and the limit \eqref{eq:Definition of the ADHM operator for bow data} exists, then $\Delta$ is the ADHM operator for an instanton on $\mathbb{R}^4$.

\subsubsection*{(2,2,2) instantons with trivial compensating gauge}

The bow data already solve the ADHM conditions: \eqref{eq:complex ADHM}
\begin{equation*}
    IJ+[B_-,B_+]=I_-J_-+I_+J_++[B_-,B_+]=0
\end{equation*}
is solved trivially since $I_\pm,J_\mp,B_+=0$ for $\psi=\pm\frac{\theta}{2}$, while \eqref{eq:real ADHM} is equal to \eqref{eq:ADHM equation for trivial compensating gauge bow data}, the equation we are left solving in this case. The bow data therefore already define the ADHM data for an instanton, and we can see that the $U(1)$-symmetric ADHM data is recovered.

\subsubsection*{(2,2,2) instantons with non-trivial compensating gauge and non-trivial global phase}

Considering the bow data \ref{subsubsection: summary of the data}, and considering the ADHM equations \eqref{eq:complex ADHM}, \eqref{eq:real ADHM}, we see that:
\begin{equation}\label{eq:ADHM equation for (2,2,2) Nahm data}
    II^\dagger-J^\dagger J+[B_+^\dagger,B_+]+[B_-^\dagger,B_-]=[B_+^\dagger,B_+]=2\kappa_1\tan(\kappa_1(\frac{l}{2}-s_1))\sigma^2
\end{equation}
and
\begin{equation}
    IJ+[B_-,B_+]=I_\pm J_\mp+[B_-,B_+]=0.
\end{equation}
Considering small $l\in(0,\varepsilon)$, we can see that $\kappa_1\tan(\kappa_1(\frac{l}{2}-s_1))$ is small and the remaining data dominates, and therefore in the limit the ADHM equations are solved, with ADHM data:
\begin{equation}
    B_-=Z\operatorname{id},\quad B_+=0,\quad I=\begin{pmatrix}
        W\begin{pmatrix}
            i \\ 1
        \end{pmatrix} & 0
    \end{pmatrix},\quad J=\begin{pmatrix}
        0 \\ W\begin{pmatrix}
            i & 1
        \end{pmatrix}
    \end{pmatrix}
\end{equation}
for $Z,W\in\mathbb{C}$.

\subsubsection*{(2,2,2) instantons with non-trivial compensating gauge and trivial global phase}

Considering the data in \ref{subsection: non-trivial compensating gauge and psi=0}, and considering the ADHM equations \eqref{eq:complex ADHM}, \eqref{eq:real ADHM}, we see that:
\begin{equation}
    II^\dagger-J^\dagger J+[B_+^\dagger,B_+]+[B_-^\dagger,B_-]=[B_+^\dagger,B_+]=2\kappa_1\tan(\kappa_1(\frac{l}{2}-s_1))\sigma^2
\end{equation}
and
\begin{equation}
    IJ+[B_-,B_+]=I_\pm J_\mp+[B_-,B_+]=0
\end{equation}
similar to above. Considering small $l\in(0,\varepsilon)$, we can see that $\kappa_1\tan(\kappa_1(\frac{l}{2}-s_1))$ is small and the remaining data dominates. So, in the limit the ADHM equations are solved with ADHM data
\begin{equation}
    B_-=Z\operatorname{id},\quad B_+=0,\quad I=\begin{pmatrix}
        W_-\begin{pmatrix}
            -i \\ 1
        \end{pmatrix} & W_+\begin{pmatrix}
            -i \\ 1
        \end{pmatrix}
    \end{pmatrix},\quad J=\begin{pmatrix}
        -W_+ \begin{pmatrix}
            1 & i
        \end{pmatrix} \\ W_- \begin{pmatrix}
            1 & i
        \end{pmatrix}
    \end{pmatrix}
\end{equation}
for $Z,W_\pm\in\mathbb{C}$.

% \subsubsection*{Instantons with non-zero magnetic charge}

% Both the $(1,2,1)$ and $(2,1,2)$ bow data limit to ADHM data for a 1-instanton. In general, if the limit exists, the bow data of either a $(k+1,k,k+1)$ or a $(k,k+1,k)$ type bow will recover the ADHM data for a $k$-instanton.

\subsection{The caloron limit}\label{subsection: the caloron limit}

Unlike the instanton limit, we do not have a general argument for the caloron limit of the bow data. Instead, we will approach each case separately. The Nahm data for a caloron is defined over a circle with 2 fundamental points, which split the circle into 2 regions \cite{Charbonneau2010TheCalorons}. To recover caloron Nahm data, therefore, we require that the bow data, in the limit $|\vec{q}|\rightarrow\infty$, becomes periodic across the bifundamental point. In which case the bifundamental point can be disregarded and we are left with periodic Nahm data over a circle with 2 fundamental points, in other words we are left with the Nahm data of a caloron. In the case that the limit exists, we will be comparing the produced data with the calorons given by Harland \cite{Harland2007LargeCalorons}, and we show that we can recover all calorons presented in that paper. We note that these limits have been taken in a gauge in which $T^0=0$ across the whole bow, this requires a non-periodic gauge transformation. A constant $T^0$ may be recovered for caloron Nahm data using a non-periodic gauge transformation in the centraliser of the compensating gauge.

\subsubsection*{(2,2,2) instantons with trivial compensating gauge}

Here $\alpha_1=\alpha_3=\frac{1}{2}(2q-|Z|^2-|W|^2)\operatorname{id}$, and $\beta_1=\beta_3=0$ are already periodic, so considering some limit of the bifundamental data such that
\begin{equation}
    \lim_{|q|\rightarrow\infty}\left(2q-|Z|^2-|W|^2\right)=L<\infty
\end{equation}
leaves the Nahm data finite and periodic, and leaves the fundamental data finite. This therefore produces caloron Nahm data and recovers the trivial $U(1)$-symmetric $(2,2)$ caloron Nahm data.

\subsubsection*{(2,2,2) instantons with non-trivial compensating gauge and non-trivial global phase}

We need to take the $|q|\rightarrow \infty$ limit in such a way that
\begin{equation}
    \lim_{|q|\rightarrow\infty} u_1=\lim_{|q|\rightarrow\infty}\left(q+\frac{1}{2}\kappa_1\cot\left(2\kappa_1\left(\frac{l}{2}-s_1\right)\right)\right)=L<\infty
\end{equation}
in order to do this, write $q=L-\frac{1}{2(l-2s_1)}+\mathcal{O}(\frac{l}{2}-s_1)$ for fixed $L<\infty$. Then $|q|\rightarrow\infty$ as $s_1\rightarrow\frac{l}{2}^-$, and from the Laurent expansion of the cotangent we have
\begin{equation}
    \lim_{s_1\rightarrow\frac{l}{2}^-}\left(q+\frac{1}{2}\kappa_1\cot\left(2\kappa_1\left(\frac{l}{2}-s_1\right)\right)\right)=\lim_{s_1\rightarrow\frac{l}{2}^{-}}\left(L+\mathcal{O}\left(\frac{l}{2}-s_1\right)\right)=L.
\end{equation}
With this, we can see that $B_+=\sqrt{\kappa_1\csc(\kappa_1(\frac{l}{2}-s_1))}e^{i\eta}\operatorname{id}\rightarrow\infty$, and $B_-=\sqrt{\frac{1}{2}\kappa_1\tan(\kappa_1(\frac{l}{2}-s_1))}(\sigma^3+i\sigma_1)\rightarrow0$, and the remaining data is finite with $\alpha_1(-\frac{l}{2})=\alpha_3(\frac{l}{2})$, and $\beta_1(-\frac{l}{2})=\beta_1(\frac{l}{2})$. We can therefore see that the bow data becomes periodic, and so recovers the non-trivial $U(1)$-symmetric $(2,2)$ caloron.

\subsubsection*{(2,2,2) instantons with non-trivial compensating gauge and trivial global phase}

Since the bifundamental matching conditions are the same in this case as the previous one, we again require that $\lim_{|q|\rightarrow\infty}u_1=L<\infty$, and therefore we wish to follow the same analysis. In the above analysis we required taking the limit $s_1\rightarrow\frac{l}{2}^-$, however recall that for $\psi=0$ we required $s_1<\lambda<\frac{l}{2}$ to ensure that $l_3=\kappa_1\tan(\kappa_1(\lambda-s_1))-\kappa_2\tan(\kappa_2\lambda)>0$ was possible. We can therefore see that the caloron limit cannot exist for this class of solutions to the bow. This matches \cite{Harland2007LargeCalorons}, in which there are no caloron solutions to the Nahm data with a trivial global phase. This demonstrates the existence of Taub-NUT instantons that do not limit to calorons under $|q|\rightarrow\infty$.

% \subsubsection*{Instantons with non-zero magnetic charge}

% Both the $(1,2,1)$ and $(2,1,2)$ instantons limit to $(2,1)$ calorons, the Nahm data of which are gauge equivalent up to the action of the `rotation map' which interchanges the roles of the constituent monopoles in the caloron, in turn interchanging the ranks of the intervals. In the trivial compensating gauge cases, the bow data already defines caloron Nahm data since it is already periodic across the interval. The non-trivial compensating gauge $(1,2,1)$ bow data is also already periodic, and therefore defines the Nahm data of a caloron; in the $(2,1,2)$ case, however, a similar analysis as in the $(2,2,2)$ case can be made to ensure the bow data is periodic across the point $s=\frac{l}{2}$. In these limits the trivial and non-trivial $(2,1)$ caloron Nahm data constructed by Harland is recovered.

\section{Conclusions}

We have presented the first systematic analytic solutions of the bow, as presented in \cite{Cherkis2021InstantonsTheorem,Cherkis2024InstantonsConstruction, Cherkis2025InstantonsIsometry}, describing the moduli spaces of Taub-NUT instantons beyond charge 1. We have done this by constructing and classifying all $U(1)$-symmetric Taub-NUT instanton moduli spaces for $(k,m)=(2,0),(1,1),(2,-1)$. In doing so we have demonstrated that most of our constructed solutions can be directly related to $\mathbb{R}^4$-instantons or to calorons in particular limits of the bow; we have also demonstrated, however, the existence of Taub-NUT instantons \ref{subsection: non-trivial compensating gauge and psi=0} that do not limit to calorons.

One potential direction of future work is to investigate the moduli spaces constructed in more detail, to consider their induced metrics and their properties. The moduli spaces constructed are geodesic submanifolds of the full moduli space since the action considered acts as an isometry. The symmetry actions considered fix exactly one of the complex structures of the moduli space, and as such the symmetric moduli spaces will be K\"{a}hler submanifolds of the full moduli space, this is corroborated by the fact all the constructed moduli spaces are even dimensional. Therefore, a more in-depth study of the symmetric moduli space could give insights into the structure of the full moduli space.

In particular, a more detailed study of the instantons with trivial phase action would be of interest, since these do not limit to calorons and represent a class unique to the Taub-NUT space. A numerical construction of the gauge field on the Taub-NUT space through the bow transform could demonstrate the way in which these are different from the examples that do produce calorons.

There is also the possibility of extending these results to a classification of instantons on the multi-Taub-NUT spaces; the bow transform generalises in a somewhat straightforward way to ALF gravitons with $N$ NUT centres, so a similar classification for multi-Taub-NUT may be possible. In this case, considering the limit $l\rightarrow0$ gives an $(N-1)$-centred ALE space \cite{Kronheimer1989TheQuotients}, and we would expect the corresponding bow to limit to the ADHM data for an instanton on said ALE space \cite{Kronheimer1990Yang-MillsInstantons}. Alternatively, considering the limit where one centre is sent to spatial infinity the $N$-centred Taub-NUT space becomes an $(N-1)$-centred Taub-NUT space, here we would expect the effect on the bow data to be that of removing a bifundamental point, in a similar way to removing the bifundamental point from the single centred bow leaves the Nahm data for a caloron.

\subsubsection*{Acknowledgements}

I would like to thank my PhD supervisor Dr Josh Cork for the many useful discussions and insights.

\printbibliography

\end{document}